\newcommand{\remove}[1]{}
\newcommand{\oLog}{{\mathcal O}^{Log} } 
\newcommand{\cc}[1]{\mathcal{#1}}
\newcommand{\mathleft}{\@fleqntrue\@mathmargin0pt}
\begin{document}
%

%

%

\title{Secure Logging with Security against Adaptive Crash Attack \thanks{A short version of this paper has been accepted to the 12th international symposium on Foundations and Practice of Security (FPS 2019).}}

\author{Sepideh Avizheh, Reihaneh Safavi-Naini, Shuai Li}
\institute{University of Calgary, Alberta, Canada}

\maketitle
%

	\begin{abstract}

	Logging systems are an essential component of security systems and their security has been widely studied. Recently (2017) it was shown that existing secure logging protocols are vulnerable to {\em crash attack } in which the adversary modifies the log file and then crashes the system to make it indistinguishable from a normal system crash. The attacker was assumed to be non-adaptive and not be able to see the file content before modifying and   crashing it (which will be immediately after modifying the file). The authors also proposed a system called SLiC that protects against this attacker. In this paper, we consider an (insider) adaptive adversary who can see  the file content as new log operations are  performed. This is a powerful  adversary who can attempt to rewind the system to a past state. We formalize security against this adversary and introduce a  scheme with provable security. We show that security against this attacker requires some (small) protected memory that can become accessible to the attacker after the system compromise. We show that existing secure logging schemes are insecure  in this setting, even if the system provides some protected  memory as above. We propose a novel  mechanism that, in its basic form,  uses  a pair of keys that evolve at different rates, and employ this mechanism  in an existing logging scheme that has forward integrity to obtain	a system with provable security against adaptive (and hence non-adaptive) crash attack. We implemented our scheme on a desktop computer and a Raspberry Pi, and showed in addition to higher security, a significant efficiency gain over SLiC.
		
\end{abstract}
\keywords{Secure logging; crash attack;  adaptive attack; forward security}

%
%


\section{Introduction}
Computer systems use logging function to store and keep track of important events  in the system.
Log files are used for a variety of purposes including trouble shooting, intrusion detection and forensics \cite{logTroubleShooting,logAccountability,logForensic}.
In many cases, adversaries want to stay covert and be able to modify the log files without being detected. Thus, integrity of log data is essential, and
protecting the log files against tampering and modification has been an active area of research. The simplest  form of protection is to store each log entry  with the corresponding message authentication code (MAC), and with a key that is unique to the entry to ensure the entries cannot be permuted \cite{BY}.
One  cannot expect any protection after the time of system compromise: the attacker is assumed to have access to the system, algorithms, and the keys at the compromise time (full state of the system) and can add any log that they desire afterwards. 
Thus the goal of protection is maintaining integrity  of the past logs. This is called {\em forward security} or {\em forward integrity} \cite{BY} and is achieved by evolving forward (using a one-way function) the key that is used for generating integrity information of log entries. In
\cite{Schneier-Kelsey}, authors used forward integrity based one MAC and hash chains and proposed a secure audit log for a local untrusted logging device that has infrequent communication with the verifier. LogCrypt \cite{Logcrypt} made some improvement to \cite{Schneier-Kelsey} such as the ability to use public key cryptography as well as aggregating multiple log entries to reduce latency and computational load. Forward integrity, however, does not protect against truncating of the log file: the adversary can remove entries at the end without being detected.  This attack can be protected against by including an aggregate MAC (signature) to the file, which is proposed in \cite{FSSAA,MaTsudik} through the notion of forward-secure sequential aggregate authentication.  The authentication data (tag) per log entry is sequentially aggregated and the individual tag is removed.
 
When a system crash happens, the data that are stored in caches (temporary memories) will be erased or become unreliable.
Caches may include new log entries and updates to the stored log entries, so a crash would result in the loss of new entries, that have not been stored yet,  and inconsistency of existing ones.  This provides a window of opportunity  for attackers to modify the log file and remain undetected by crashing the system.
Crash attack was introduced and formalized  by Blass and Noubir \cite{CrashAttack}. They showed that all existing secure log  systems
were  vulnerable to this attack.
 Blass and Noubir  formalized the security notion of crash integrity using a game between  the adversary and a challenger, and  proposed a system, SLiC, that provides protection in this model.
 SLiC encrypts and permutes the log entries so that they cannot be known to a non-adaptive adversary who gets only one time read and tampering access to the system (please see Appendix \ref{SLiCDiscription} for details).

\vspace{1mm}
\noindent
{\bf Our work:}  We consider a  secure logging  system that  uses an initial key (that is shared with the verifier) to generate authenticated log entries  that are stored in the log file.
We assume an (insider) {\em adaptive  crash adversary }who can   adaptively choose the messages that will be logged and can 
see the log file after each logging operation. The goal of the adversary is to remove and/or  tamper with the logged elements.
We show that  without  other assumptions and by the verifier only using their  secret key, it is impossible to provide security against adaptive crash attack.  We thus assume the system stores (and evolves)  its keys in a small protected memory, that will become 
accessible to the adversary after the system is compromised.
 Such a memory can be implemented using trusted hardware modules whose content will not be observable during the normal operation of the system, but can become accessible if the system crashes. We formalize security and show that  SLiC is  insecure in this model and an adversary who can see the intermediate states of the log file can  successfully {\em rewind} the system to a previous state. 

{\em Adaptive crash resistance:}  We introduce a {\em double evolving key mechanism} which, in the nutshell,
uses two keys, one evolving with each log event and one evolving at random intervals, that 
reduces the success chance of crash attack even if the adversary is adaptive.
 The keys become available after the system compromise but the random interval evolution limits the success probability of the adversary to successfully  rewind the  system to a previous state.
We analyze this system in our proposed model and prove its security against an adaptive attacker. This mechanism can be extended to multiple independent keys evolving at different rates to enhance the security guarantee of the system.
We implemented double evolving key mechanism on a windows PC and Raspberry PI and compared the results with those reported for SLiC  \cite{CrashAttack}, showing significantly improved time-efficiency.

{\em Discussion:}  The double evolving key mechanism keeps the logged events in plaintext and  provides an elegant and very efficient solution against non-adaptive crash attack. SLiC, the only secure logging system with security against (non-adaptive) crash attack, provides security by encrypting and permuting elements of the log file.  This makes access to logged data extremely inefficient: one needs to reverse the encryption and permutation to access the required element.
For functionalities such as searching for a pattern or keyword, this means recovering the whole log file which is impractical.
The comparison of the two systems  is further  discussed in Section \ref{SecurityAnalysis}.

\remove{ MOVE THESE TO THE ANALYSIS?COMPARISION SECTION 
{\color{red}We make the observation that the crash attacker employs the possibility of having an inconsistent system key when the system crashes. The key becomes inconsistent when it is being updated,  and  the new value of the key is in the cache and has not yet been written to the log file (while the write operations of disk entries with the new key has been completed). If the old key is removed from the disk, a crash at this point will remove the cache and so the new key.} \remove{, while the disk entries have been updated with new key.}
 So by lowering the frequency of key update, one reduces the chance of this event. On the other hand, one needs to evolve the key with each logging operation to ensure protection against other attacks (e.g. reordering of logged events). 
Our proposed double evolving key system uses two keys that evolve at different rates, and allows protection against all previous attacks and adaptive crash attack. We show that in our scheme adversary can remove at most $2cs$ log entries by considering an appropriate $m$ which is the security parameter of the system against rewinding. Our proposed system is more efficient than SLiC  \cite{CrashAttack}: the required computation for each log operation consists of i) updating the keys, ii) computing the HMAC; however, this computation in SLiC  \cite{CrashAttack} consists of (i) updating the keys,(ii) encrypting the log entry, and (iii) performing a local permutation of the log file. Very importantly,  in our system log events need not be encrypted, and the order of elements  in the log file stays the same as the order of log event sequence that is input to the system. This means that the search operation on the log file can be efficiently performed. In SLiC log entries are encrypted and reordered and so search operation will be very costly. 
\remove{We also show that our scheme has $O(n)$ computational complexity in recovery for a log file of length $n$ which is more efficient than SLiC \cite{CrashAttack} in which the complexity is $O(nlogn)$ in the same scenario.}
We also show that our scheme is more efficient than SLiC \cite{CrashAttack} in recovery for a log file. We implement the new mechanism and SLiC on a windows computer as well as Raspberry Pi and compare their performance. Our Scheme is approximately faster than SLiC  \cite{CrashAttack} by a factor of 2.
}

{\em  Organization:} Section \ref{Background}  gives the  background; Section \ref{SystemModel}, describes adaptive crash model  and its relation to non-adaptive case. Section \ref{RewindSecure} proposes the double evolving key mechanism and Section \ref{SecurityAnalysis} is on the security and complexity analysis of our scheme.
 Section \ref{Implementation} explains the implementation, and Section \ref{Conclusion} concludes the paper.

\section{Preliminaries}\label{Background}

We use the system model of Blass et al. \cite{CrashAttack} which models 
many systems that are used in practice, 
and focus on the settings where the verifier is mostly offline and checks 
 the log file once in a while 
 (infrequently).  

An {\em event }$m_i$ is a bit string that is stored in the log file together with  an authentication tag $h_i$, such as $h_i= HMAC_{k_i}(m_i)$.
The key $k_i$ is for authentication of the $i^{th}$ log entry and is generated from an initial seed. The key $k_i$ is evolved to $k_{i+1}$ for $(i+1)^{th}$ entry and $k_i$ is removed.
Using a different key for each element protects not only against reordering, but also ensures that if the key is leaked, past keys cannot be obtained and past entries cannot be changed. A common way of evolving a key is by using a  {\em pseudorandom function family $PRF_k(.)$} indexed by a set of keys \cite{BY}, that is, $k_{i+1} = PRF_{k_i} (\chi) $, where  $\chi$ is a constant. The security guarantee of a PRF family, informally, stated as follows:
 a function that is chosen randomly from the PRF family  cannot be distinguished from a random oracle (a function whose outputs are chosen at random), using an efficient algorithm, with significant advantage. 
To   protect against {\em truncation attack} where the adversary removes the last $t$ elements of the log file, one can add an aggregate hash  $h_{i+1}= HMAC_{k_{i+1}}(m_{i+1},  h_i )$ and delete $h_i$, or use an {\em aggregate signature} where signatures generated by a single signer are sequentially combined. If verification of aggregate signature is successful, all the single signatures are valid; otherwise, at least one single signature is invalid.

In all these schemes, the event sequence order in the log file remains the same as the original event sequence, and the verification requires only an original seed from which the key for the rest of the system can be reconstructed.
 Crash attack uses this property and the fact that a crash will remove all the new events and a number of the stored events that must be updated, so it makes parts of the log file, including the stored keys, inconsistent.
 This possibility in a crash can be exploited by the adversary to launch a successful truncation attack. Blass et al. system, called SLiC \cite{CrashAttack}, protects against crash attack by encrypting  each stored  log entry (so makes them indistinguishable from random), and  uses a randomized mapping that 
 permutes the order of the log entries  on the log file using a pseudorandom number generator (PRG). 
 Informally, a PRG uses a seed to  generate  a sequence of numbers that is
   indistinguishable from a random sequence.
Using the PRG, the order of storing events in the log file will appear  ``random" to the adversary who does not know the PRG seed and so truncation attack is prevented.
 This protection however will not work against an adaptive attacker who will be able to see the result of storing a new event, and  by comparing the new re-ordered  log file with the previous one learn the places that can be tampered with (See section \ref{AdaptiveGame} for details of the attack).

As outlined above, storing a new event and its authentication  data will result
in the  update of some existing entries in the log file. In particular, to update  a stored value $x$ to $x'$, the following steps will happen:
(i) read $x$ and compute $x'$, (ii) store $x'$, and  (iii)  delete $x$.  However the last two steps may be re-ordered by the operating system, so when a crash happens,  the state of the update will become unknown: that is $x$ has been deleted and $x'$ has not been written yet.
This reordering would result in inconsistency during the log verification.
When a crash happens, the data in the cache becomes unreliable and the verification of the log file requires not only the initial seed, but also an estimate of the part of the log file that is verifiable.
 \remove{This is determined using the  parameter {\em cache size} \cite{CrashAttack}, and estimating} Similar situation can happen in the update of keys, resulting in both $k_{i-1}$ and  $k_{i}$  to become unavailable for  the system recovery. The goal of the verifier is to recover the largest  verifiable log sequence from the crashed system.

	\vspace{-0.4cm}
	\section{System and adversary model}\label{SystemModel}
	We first give an overview of our system and the adversary model. 
	There are three entities: 1) a logging device $\cc{L}$, 2) a verifier $\cc{V}$, and 3) an adversary $\cc{A}$. 
	\remove{
	The logging device  receives 
	{\em  log events $m_i\in \{0,1\}^*$} that are stored in a permanent  storage medium (log file), that we refer to as  {\em disk}. Storing a  log event requires	first the update of the systems'  keys,  followed by computing the required authentication data, and finally generation of
	 a set  of write operations	$\{o(m_u)  \cdots o(m_v)\}$  on the disk.
	We assume that there is a small protected non-volatile  memory in the system that will be used for system's current keys. We refer to this non-volatile memory as {\em key disk}.
	$\cc{A}$  is an insider who can see the log file updates. They can also compromise the system and after that  obtain full  access to all the system log file, key disk, and the cache. 
	The adversary uses purposeful crash to make the compromise and modifications indistinguishable from a normal crash.
	That is we assume  a crash will make the protection of  the key disk ineffective and the adversary will gain full access to the whole system state.
	The verifier $\cc{V}$ has the system's initialization key, and their goal is to recover the largest verifiable log sequence from the crashed state.
The security goal  is to ensure that  $\cc{A}$ cannot modify or delete  events from the log file, after the crash.}

	\textbf{Logging device} $\cc{L}$, stores the event  sequence and  current keys  using the following types of storages:
		(i) LStore is a disk (long term storage)  that stores	 log events.  This disk can be read by the attacker  when the system is compromised.
		(ii) Log cache  is a temporary memory that is used for  
		the update of the LStore.
		(iii)  KStore is the key disk  that is used to store current keys of the system.  This is a non-volatile memory that will become available to the adversary when the system crashes. 
		KStore   uses a protected cache for its update. \\	
	The logging device receives a sequence of events $m_1,m_2 \dots  m_i, i\in N^+$, where   $i$ is the order of appearance of the event $m_i$ in the sequence, and $N^+$ is the set of positive integers. 
	The state of the logging device  after $m_i$ is logged,  is specified by $\Sigma_i=[\Sigma^{K}_{i}, \Sigma^{L}_{i}, Cache_i]$, where 
	$\Sigma^{L}_{i}$,  $\Sigma^{K}_{i}$ and  $Cache_i=\{cache^{L}_{i},cache^{K}_{i}\}$ are the states of  the LStore,   the KStore  and their caches, respectively, after  $m_i$ is logged. 
	
	The {\em log operation} $Log(\Sigma_{i-1}, m_i)$ takes the   state 
	$\Sigma_{i-1}$, and the  log event  $m_i$, 
	uses the cache as a temporary storage, and updates  LStore  for  the storage of  the  (processed) log event. 
	 This operation uses KStore cache to update the keys in the KStore. We assume this cache only holds 
	the required data for updating  $k_{i-1}$ to $k_i$ that is used in $Log(\Sigma_{i-1}, m_i)$. This assumption is used to estimate the amount of key information that will be unreliable after a crash.  We also assume that KStore has enough size to hold the current key $k_{i}$.
 	
	The $Log(\Sigma_{i-1}, m_i)$ operation,  (i) generates  a set  of write operations\\ $\{o(m_u)  \cdots o(m_v)\}$, which we denote with $\oLog(\Sigma_{i-1}, m_i)$,  on the LStore (i.e. $\Sigma^{L}_{i}$ and its associated cache are updated), and (ii) updates KStore (i.e. $\Sigma^{K}_{i}$ and its cache are updated).
A disk write operation $o(m_i)$ (we call it a log file entry)  writes to the disk $m_i$ together with its authentication data.
The initial states of LStore and KStore  are  denoted by $\Sigma^{K}_{0}$ and $\Sigma^{L}_{0}$, respectively. $\Sigma^{L}_{0}$ contains an initial event that is used to detect complete deletion of the disk. $\Sigma^{K}_{0}$  contains the initial keys of the system.
	 
	As log events are processed, the states of the two storage systems will be updated in concert:
	after $n$ log operations, the length of  $\Sigma^{L}_{0}$ is  $n$, and the length of  $\Sigma^{K}_{0}$ is unchanged, but the content has been updated to the new values.
	The initial state of the system $\Sigma_0$ will be securely stored  and later used for verification. 
\remove{
\noindent	
	{\em Normal crash:} In a normal crash the log events which have not yet been written to the LStore
	 will be removed, and the log entries that are to be updated may be removed or become inconsistent with their authentication data.
	 This is because  of possible re-ordering of the update  sequence, (i) move $m_{i-1}$ to cache and update to $m_{i}$,  (ii)  write $m_i$
	 to LStore, and iii) delete $m_{i-1}$ LStore.
A similar situation can happen in the update of   KStore  resulting in both $k_{i-1}$ and  $k_{i}$  to become unavailable for  the system recovery.}


	\textbf{Adversary},
	\remove{During a system crash however the LStore and KStore caches  become unreliable, and there is a chance that  the key values that are written to the disk become inconsistent because crash happens at the time of key update. 
	The goal of the crash adversary is to modify the LStore and KStore  such that a verifier who uses the initial state of the system, and the crashed state,	cannot detect the attack from the events that are recovered  from  the LStore.}
	$\cc{A}$, (i) adaptively generates events  that will be processed by the $Log(\cdot,\cdot)$ operation of $\cc{L}$;
	 $\cc{A}$ can see  LStore and its cache after each $Log(\cdot,\cdot)$ operation; (ii) compromises $\cc{L}$  and accesses  KStore and its cache,  and
	modifies the state of $\cc{L}$, and finally crashes  the system.
	The goal of the crash adversary is to modify the LStore and KStore  such that a verifier who uses the initial state of the system, and the crashed state cannot detect the attack. In Section \ref{AdaptiveGame}, we define our security game using this model. 
\vspace{-0.7cm}
\begin{figure*}
	\centering
	\setlength{\abovecaptionskip}{0ex}
	\setlength{\belowcaptionskip}{-4ex}
	\includegraphics[width=0.9\textwidth]{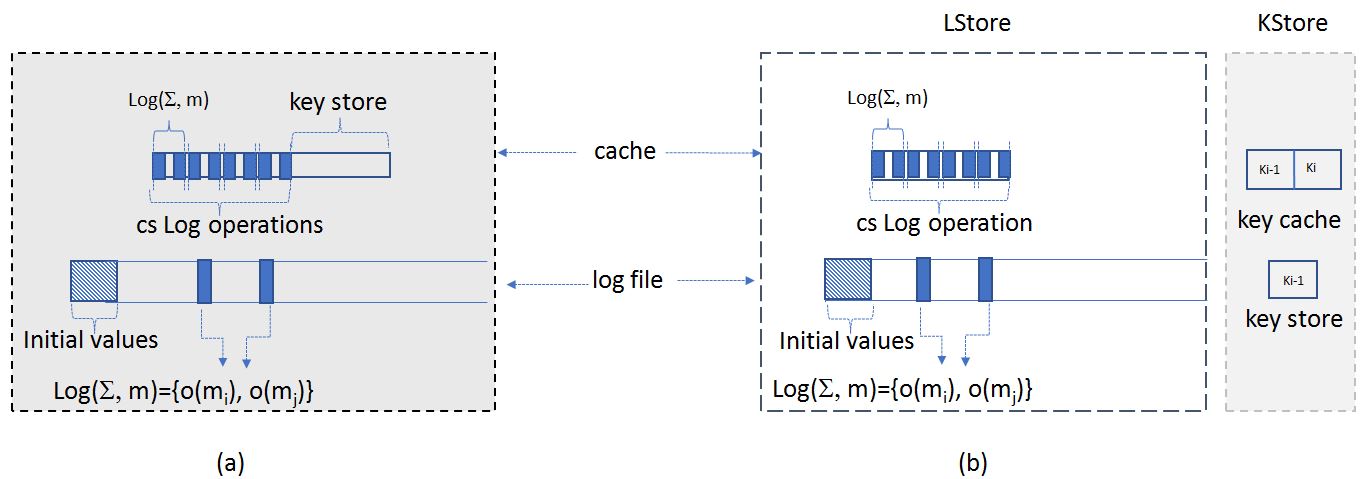}
	\caption{\small (a) Non-adaptive adversary, \remove{:  Log operation generates write events $o(m_i),\cdots, o(m_j)$; cache stores write operations of $cs$  logging operations, the current keys and other system parameters. \remove{In the security game the adversary does not see the intermediate results of logging operation.} After the system compromise, the log file and cache will become available to the adversary.}
	(b) Adaptive adversary. \remove{: Log operation is the same as (a);  cache stores write operations of $cs$ logging operations.  A key-store, KStore, stores the key that is required for one disk write $o(m_i)$, and a key cache is used for update of this key.
	In the security game the adversary sees intermediate results  of  the logging operation; after the compromise, the content of the key store and key cache will become available to the adversary.
	\remove{In both systems the initial state of the system, consisting of the initial key, and the initial value of the log file will be used in Recover function.}} The shadowed parts are invisible to adversary before compromising the system.} \label{Models}
\end{figure*}

Figure \ref{Models} shows the differences between our adversarial model and that of Blass et al. \cite{CrashAttack}. In Figure \ref{Models}.(a) $Log(\cdot,\cdot)$ operation generates disk writes in the cache first, which are then written to the log file. 
The system current key resides in the system cache also. The adversary can use the $Log(\cdot,\cdot)$ operation on the message sequence of their choice but cannot see the intermediate results of logging until the system is compromised. It is easy to see that in this model it is impossible to provide security if the adversary is given access to the system after each $Log(\cdot,\cdot)$ operation:  the adversary observes the current key and can simply use it to generate any arbitrary log event and later write it in the log file without being detected. Figure \ref{Models}.(b) shows our model. 

	We will not consider the case that the adversary adds new entries to LStore: this can always be  done  undetectably because the adversary knows the content of the KStore after the compromise.  We however require that the log events that have been stored ``before the time  of the compromise (crash)'', remain untouched.

	\textbf{Verifier}, $\cc{V}$, uses $Recover(\cdot,\cdot)$  algorithm that takes  the current state of the logging system, and the initial state $\Sigma_0$,	and outputs either the list of consistently stored events, or $\perp$ which indicates untrusted log.
	
	\remove{We use a parameter $cs$, first introduced in \cite{CrashAttack}, that  is the maximum number of log events that will be lost  during a  normal crash.
	This number can be estimated for a particular implementation (e.g., taking into account the caches of  OS, file system, hard disk, ...),
	and allows us to estimate the maximum length of the LStore.	More specifically, a $Log(\cdot,\cdot)$ operation  will include  at least one $o(m_i)$, and  so will affect at least one element of LStore.
	  Thus if  the highest recovered index after a crash is $u_1$, one may assume that  at most $m_{u_1+1}\dots m_{u_1+cs}$ have not had the chance of being written to the LStore.  
	That is, the disk writes of these events,  $\oLog(\Sigma_{u_1}, m_{u_1+1})$ $\cup \dots \cup$ $\oLog(\Sigma_{u_1+cs-1}, m_{u_1+cs})$,   
	 have not been completed, and the corresponding cells in the LStore have not been correctly updated.}
	 \vspace{-0.2cm}
	\subsection{Logging Protocols}
	A logging  protocol $\Pi$ consists of three algorithms: 
	
	\noindent	1.  $Gen(1^{\lambda})$: 
		$Gen$ takes a security parameter $\lambda$ and outputs $\Sigma_0$, which is $\cc{L}$'s initial state, $\Sigma_0=[\Sigma^{L}_{0},\Sigma^{K}_{0},Cache_0=\emptyset]$, and will be  stored  securely for future use by the verifier $\cc{V}$.
	The initial state includes: (i) $\Sigma^{L}_{0}$, that is, the initial state of the log file and it is initialized securely to protect against complete removal of the log file, (ii) $\Sigma^{K}_{0}$, that  stores the initial seed keys,		and (iii) $ Cache_0=\{ cache^{L}_0, cache^{K}_{0}\}$, which are assumed  to be initially empty.
	
	\noindent	2. $Log(\Sigma_{i-1},m_i)$:  Let $\Sigma_{i-1}=[\Sigma^{K}_{i-1}, \Sigma^{L}_{i-1},  Cache]$ be the current state after $i-1$ sequence of events are logged. For an  event $m_i\in\{0,1\}^{*}$,  and the current state $\Sigma_{i-1}$, the operation  $Log(\cdot,\cdot)$ outputs, either a new state $\Sigma_i$, or a special state $\Sigma^{cr}_{i}$, called a {\em crashed state}.
		A non-crashed state is a {\em valid state} that is the result of using $Log(\cdot,\cdot)$ consecutively on a sequence of log events.
		If  $Log(\cdot,\cdot) $  outputs a crashed  state,  the device $\cc{L}$ has been crashed and needs to be initialized.
		
\noindent	3.  $Recover(\Sigma, \Sigma_{0})$: Receives an initial state $\Sigma_0$ and   a 	(possibly crashed) state $\Sigma$, and verifies if it is an untampered state that has resulted from $\Sigma_0$ through consecutive invocation of $Log(\cdot,\cdot)$. 
		$Recover(\Sigma, \Sigma_{0})$ reconstructs the longest sequence of
		 events  in	the LStore that pass the system integrity checks, or outputs $\perp$ which indicates an untrusted log.
		 If $\Sigma$ had been obtained from $\Sigma_0$ by consecutive applications of $n$  $Log(\cdot,\cdot)$,  then
		$Recover(\Sigma, \Sigma_{0})$ will output the $n$ logged events.
		Otherwise the set, $\cc{R}$,  of  recovered events consists of	$n' < n$  pairs	$\cc{R}=\{(\rho_{1},m^{\prime}_1),...,$ $(\rho_{n'},m^{\prime}_{n'})\}$.
		If one of  $m^{\prime}_j \neq m_{\rho_j}$, the adversary has been able to successfully modify a log entry.
		 For example the correct pair with $\rho_i=4$ will have $m^{\prime}_i = m_4$.  We use  $n$ and $n^{\prime}$ to denote the length of the logged sequence  before crash, and  the highest index  of the log file  seen by $Recover(\cdot,\cdot)$.		
		The input state $\Sigma$ to   $Recover(\cdot,\cdot)$  can be:
		(i) a valid state of the form $(Log(Log(...Log(\Sigma_{0},m_{1})...),m_{n})$, so  $Recover(\cdot,\cdot)$  outputs, $\{(1,m_1),...,(n,m_n)\}$;
			(ii) a state which is the result of a normal crash, so $Recover(\cdot,\cdot)$ outputs,$(\rho_1, m'_1), \cdots (\rho_{n'}, m'_{n'})$ where $n' < n$;
			(iii) a state which is neither of the above, so $Recover(\cdot,\cdot)$  outputs $\perp$, and a modified (forged) or missing  event is detected. 
	

\textbf{Efficiency}: To support high frequency logging and resource constrained hardware, $Log(.,.)$ is required  to be an efficient algorithm. 
	
	\vspace{-0.3cm}
	\subsection{Cache}
	 We use (a parameter) cache size $cs$, first introduced in \cite{CrashAttack}, to estimate the effect of crash when recovering the log file. $cs$ is the maximum number of log events that will be lost  during a  normal crash.
	This number can be estimated for a particular implementation (e.g., taking into account the caches of  operating system, file system, hard disk, ...),
	and allows us to estimate the maximum length of unreliable log events.
	\remove{
	More specifically, a $Log(\cdot,\cdot)$ operation  will include  at least one $o(m_i)$, and  so will affect at least one element of LStore.
	Thus if  the highest recovered index after a crash is $n'$, one may assume that  at most $m_{n'+1}\dots m_{n'+cs}$ have not had the chance of being written to the LStore and $m_{n'-cs+1}\dots m_{n'}$ have not been completely written.}

	Logging the event $m_i$  will generate a set of disk write operations, $\oLog(\Sigma_{i-1}, m_i)=\{o(m_u),...,o(m_v)\}$, that will add a new entry to the LStore and may update a number of other entries.  
	The log  operation $ Log(\Sigma_{i-1}, m_i)$ will also update     
	$k_{j-1}$ to $k_j$, for all  $o(m_j) \in \oLog(\Sigma_{i-1}, m_i)$.
	If $\cc{L}$ crashes  before $Log(\Sigma_{i-1},m_i)$  completes,	all $o(m_j)\in \oLog($ $\Sigma_{i-1},m_i)$ will be lost. This is because all these operations are 
	 in cache.
	For simplicity, we assume the KStore stores the key $k_j$ which is used in constructing $o(m_j)$ only.
	To perform $Log(\Sigma_{i-1},m_i)$, each  $o(m_j)\in \oLog($ $\Sigma_{i-1},m_i)$ will be 
	processed once at a time (The argument can be extended to the case that KStore  is larger).
	If crash happens, the  $k_j$ that is being updated will also become unreliable.
	The notion of expendable set, first introduced in \cite{CrashAttack}, captures the  $LStore$  entries that are considered unreliable when a crash happens.
	
	\textbf{Definition 1} (Expendable set ($ExpSet$)).  Let $\Sigma_n$ be a valid state comprising events $\{m_1,...,m_n\}$, and $Cache_n = \emptyset$.  Let $Cache_{n'}$ be the content of  cache after $\cc{L}$ adds events $(m_{n+1}, ... , m_{n'})$ using the $Log(\cdot,\cdot)$ operation. An event $m_i$ is expendable in state $\Sigma_{n'}$, iff
	$(o(m_i)\in \{O^{Log}(\Sigma_n, m_{n+1})\cup \dots \cup O^{Log}(\Sigma_{n'-1}, m_{n'})\}) \wedge (o(m_i)\in Cache_{n'}$). 
	The set of all expendable log entries in  $\Sigma_{n'}$ is denoted by
	$ExpSet $.  
	
	 The  definition identifies 
	  $o(m_i)$s that are in   
	   the expendable set assuming 
	 the 
	 first and the last state of the cache are known. 
	In practice  however, the verifier  
	 receives a log file of size $n'$ (events)  and without  knowing the final state of the system must decide  
	 on the length of the file that has reliable data.
	 If the cache  can hold $cs$ events, then  we consider $2cs$  events (the interval $[n- cs+1, n+cs] $) as expendable set. 
	 This is the set of events who could have resided in the cache when the crash occured.
	 Note that logging an event may generate  more that one disk write operation  that could be update of the earlier entries in the log file.
	%
	The following proposition  summarizes the discussion above. 
	\begin{proposition}(Determining expendable set). Let $\Sigma_{n'}$ be the state of the system after logging  $m_{1}, \cdots, m_{n'}$. 
		An event $m_i$ is expendable in a  state $\Sigma_{n'}$, where $n'$ is the highest index of a log entry in the LStore\footnote{Note that this LStore may be the result of normal logging operation, or after a crash.},
		 if $o(m_i)\in \{O^{Log}(\Sigma_{n'-cs},m_{n'-cs+1})$ $\cup \dots \cup O^{Log}(\Sigma_{n'+cs-1}, m_{n'+cs})\} \mbox{ and }$ possibly $
		o(m_i)\in
		Cache_{n'}. $ The set of all expendable log entries in  the recovered state $\Sigma_{n'}$ is
		$ExpSet 
		=\{m_i: m_i \mbox{  is expendable   in }  \Sigma_{n'}\} $.
	\end{proposition}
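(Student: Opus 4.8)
The plan is to obtain the proposition as a worst-case, verifier-side reformulation of Definition~1. Definition~1 characterizes expendability relative to the last fully-committed state (the most recent valid state whose cache was empty) and the actual cache contents at crash time: an event $m_i$ is expendable exactly when its disk write $o(m_i)$ was emitted by one of the log operations performed after that committed state and is still resident in the cache. The verifier, however, knows neither the true pre-crash length $n$ nor that committed boundary; it only observes the highest committed index $n'$ in the LStore together with the capacity bound $cs$. So the task is to over-approximate the condition of Definition~1 using $n'$ and $cs$ alone.

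First I would pin down the range of log operations whose writes can still be in the cache at crash time. By the cache-capacity assumption the cache holds the writes of at most the last $cs$ log operations, i.e. the operations logging $m_{n-cs+1},\dots,m_n$. Because a crash can only drop pending writes and never raises the committed index, the observed index satisfies $n' \le n$; and because at most $cs$ freshly appended events can have been lost entirely, we also have $n \le n'+cs$. Taking the union of the operation windows $[n-cs+1,\,n]$ over every admissible value $n\in[n',\,n'+cs]$ yields exactly the window of $2cs$ operations $\oLog(\Sigma_{n'-cs},m_{n'-cs+1}),\dots,\oLog(\Sigma_{n'+cs-1},m_{n'+cs})$ appearing in the statement. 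Intersecting with the residual requirement $o(m_i)\in Cache_{n'}$ then reproduces the ``possibly'' clause, and the resulting union is the claimed $ExpSet$.

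The step I expect to be the main obstacle is justifying that this operation-indexed window captures every expendable \emph{event}, since a single operation $Log(\Sigma_{j-1},m_j)$ may also emit update-writes $o(m_i)$ for earlier entries $i<j$ (the aggregate-tag updates noted above). The formulation via ``$o(m_i)\in\oLog(\cdot,\cdot)$'' rather than via the index $i$ directly already absorbs such updates, but I would still need the modeling convention—implicit in the $cs$ bound—that the backward reach of these update-writes stays inside the window, so that no entry below index $n'-cs+1$ is touched by any in-cache operation. Equivalently, $cs$ must be read as a bound on the span of entries affected by the in-cache operations, not merely on the count of newly appended events; making this reading explicit is what closes the gap between the operation-level bound of the previous paragraph and the $2cs$-event interval asserted by the proposition, and it also rules out $n>n'+cs$, which would require more than $cs$ lost writes and hence contradict the capacity bound.
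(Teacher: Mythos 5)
Your proposal is correct and follows essentially the same route as the paper: the paper's proof is a three-case analysis (all $cs$ cached events lie after $n'$, all are incomplete writes of events up to $n'$, or a mix), which is exactly your parametrization of the unknown true length $n\in[n',n'+cs]$ with the union of windows $[n-cs+1,n]$ taken over admissible $n$, yielding the interval $[n'-cs+1,n'+cs]$ of $2cs$ events. Your version is in fact slightly more careful than the paper's, since you make explicit the convention that update-writes emitted by in-cache operations do not reach below index $n'-cs+1$, a point the paper passes over silently with its phrase ``other disk write events may not have been completed.''
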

\vspace{-0.55cm}
	\begin{proof}
	We assume  cache will include up to $cs$ log events.  These events,  
	(i)  may all  be  events after $n'$; that is, from $Log(\Sigma_{n'}, m_{n'+1})$ $\cup \dots \cup$ $Log(\Sigma_{n'+cs-1}, m_{n'+cs})$, events $[(n'+1,o(m_{n'+1})),\dots,(n'+cs,o(m_{n'+cs}))]$ may have been lost,  and other disk write  events may not have been completed, or (ii) the writing is incomplete, so the logging of up to $cs$ events before $n'$ will have incomplete disk write and $Log(\Sigma_{n'-cs}, m_{n'-cs+1})$ $\cup \dots \cup$ $Log(\Sigma_{n'-1}, m_{n'})$, have been damaged, or (iii) a  random set of  $cs$ events in $Log(\Sigma_{n'-cs}, m_{n'-cs+1})$ $\cup \dots \cup Log(\Sigma_{n'+cs-1},$ $ m_{n'+cs})$, have been lost. Therefore, all the log events in the  range $[n'-cs+1,n'+cs]$ is considered to be  expendable set. 
	Thus the size of expendable set on the LStore is  $2cs$ events.

	\remove{When $n'$ is the  highest seen index of a crashed state in LStore,
	the crash may have happened when logging any  event  with index $n'-cs+1\cdots n'+cs$ in the original  log sequence and so   keys $\{K_{n'-cs+1},\cdots , K_{n'},  K_{n'+1}, \cdots  K_{n'+cs} \}$ become unreliable, where $K_i$ is all the key information required for a single log operation $Log(\Sigma_{i-1},m_i)$ and includes all the keys $k_j$ corresponding to one writing operation $o(m_j) \in \oLog(\Sigma_{i-1},m_i)$.
	The remaining key entries  in the KStore  are expected to be reliable and  must be consistent with LStore content. } 
	\end{proof}
	\vspace{-4mm}
	\subsection{Security Definition} \label{AdaptiveGame}
	The effect of crash on the system in general depends on the hardware, and is abstracted by the cache size parameter $cs$.
Our new security definition for adaptive crash attack is given in Algorithm  \ref{Game2}.  
We define a security game between the challenger and  an adversary   $\cc{A}$ that has access to the following oracles.
	
	\textbf{Gen oracle:} $GEN_Q()$ allows the adversary $\cc{A}$ to initialize a log on $\cc{L}$.  $\cc{C}$ runs $Gen(1^{\lambda})$ and returns the initial state of LStore $\Sigma'^{L}_{0}$ and its associated cache $cache'^{L}_{0}$. The state $\Sigma'_0$ is stored in the set $Q$ that  records the log queries made by the adversary.
	
	\textbf{Log oracle:} $LOG_{\Sigma,Q}()$, is a stateful function,   which allows the adversary	$\cc{A}$  to  adaptively  log events on  $\cc{L}$: the adversary adaptively chooses a message $m$ to be logged, $\cc{C}$  runs  $Log(\cdot,\cdot)$ using the current state $\Sigma$ of the system, and returns,  $\Sigma'^{L}$ (state of the LStore) and the $cache'^L$  (state of the cache)  to $\cc{A}$. 
	The state $\Sigma'$ is stored in the set $Q$ that  records the log queries made by the adversary (This is later used to detect rewind attack).
	
	\textbf{Recover oracle:} $REC_{\Sigma}()$, is a stateful function, that can be called in any state by $\cc{A}$. To respond, $\cc{C}$ runs $Recover(\Sigma,\Sigma_0)$ and returns the recovered set $\cc{R}$ which can be either $\perp$ or $\{(\rho_1,m'_1),\cdots,$ $(\rho_{n'},m'_{n'})\}$.
	
	\textbf{Crash oracle:} $CRASH_{\Sigma}()$, is  a stateful function, that can be called by $\cc{A}$ on any state $\Sigma$ and  allows  $\cc{A}$  to learn the effect of crash on the system by accessing  the complete state $\Sigma$ of the system including the KStore. $CRASH_{\Sigma}()$ returns $\Sigma^{cr}$ as the state of the logging device $\cc{L}$.
	
	In Algorithm \ref{Game2}, the first stage is for adversary to learn. $\cc{A}$ gets oracle access to all the functions mentioned above and chooses $n$ messages to log.  Challenger $\cc{C}$,  generates the initial keys and initializes the $KStore$, $LStore$, and the $Cache$. 
	At this stage, adversary has oracle access to $GEN_Q()$, $LOG_{\Sigma,Q}()$ and $CRASH_{\Sigma}()$.	$\cc{A}$ adaptively issues $n$ log queries, $m_1 \dots m_n$, to $LOG_{\Sigma,Q}()$ oracle.
	The oracle executes  $Log(\Sigma, m)$ for each message and returns  the  LStore and $ cache'^L$  of the resulting state $\Sigma'$ to adversary. 
	$\Sigma'$ is   stored in the queried set $Q$. After $n$ calls to $Log(\cdot,\cdot)$, $\cc{A}$  calls $CRASH_{\Sigma}()$, gets full access to the LStore, KStore and Cache, which  all will  be tampered as desired, and then crashes the system.
	Adversary outputs a sequence of $\ell$ positions $\alpha_i$, where $\alpha_i \in [1,n]$, none of which correspond  to the   index of an element in the  expendable set, assuming $n$ is the highest index in LStore seen by the verifier. The algorithm $Recover(\cdot,\cdot)$ outputs a sequence of $n'< n$ index-event pairs $\{(\rho_i, m'_i)\}$.
	Intuitively,  the adversary wins if, (i) one of their outputted indexes appear in $ \cc{R}$ with a value different from the original logged sequence  (i.e. successfully  changed by the adversary), (ii) one of the outputted indexes does not appear in $ \cc{R}$ (that is successfully deleted by the adversary),  or (ii)  the  recovered list $ \cc{R}$ matches the LStore of one of the queried states.
 
 	\textbf{Definition 2} (Crash Integrity). $\cc{A}$ logging protocol $\Pi=(Gen,Log,Recover)$ provides $f(\lambda)$-crash integrity against adaptive adversary $A$, iff for all PPT \footnote{Probabilistic Polynomial Time} adversaries there exist a negligible function $f(\cdot)$ such that $Pr[Exp^{AdapCr}_{A,\Pi}(1^\lambda,cs)] \leq f(\lambda)$.

	\subsection{Impossibility result}
	Existing secure log schemes, i.e. \cite{BY,Schneier-Kelsey,Logcrypt,MaTsudik}, consider an ordered log where a new log entry is appended to the end of LStore. 
	These schemes use key evolution but do not use secure hardware or platforms to store  the latest secret key that captures the state of the log file.  Nor do they rely on a trusted third party to safeguard this information.
These protocols are vulnerable to non-adaptive crash attack \cite{CrashAttack}  because adversary knows  the order of log entries,
 can truncate the log file  
and delete the keys, leaving the system in a   stateless situation, which 
 makes 
it  impossible to distinguish a crash attack from a normal crash.
 SLiC, is the only known crash tolerant scheme \cite{CrashAttack} which masks the order of elements in the log file by
 encrypting them and applying 
 a random permutation on the location of log entries in the LStore. However, it cannot protect against rewinding in an adaptive adversarial model. 
All existing schemes,  
 including SLiC, are vulnerable to adaptive crash attack even considering a protected KStore according to our model. This is because
  the KStore 
  can be undetectably removed or modified when the system is  compromised and this will again put the logging system in a state that is indistinguishable from a normal crash.
  In another words,  a logging system that cannot \textit{reliably} protect its state information during logging operation, 
  and assuming an adaptive adversary who can see the LStore, is  subjective to rewinding.
	
	We note  that \textit{$Exp^{AdapCr}_{\cc{A},\Pi}()$} is stronger than  \textit{$Exp^{CrInt}_{\cc{A},\Pi,Crash}()$} game \cite{CrashAttack}. This can be proved by showing two claims.
	Claim 1: if a non-adaptive adversary $\cc{A}_{na}$ is successful in breaking a scheme, 
	an adaptive adversary $\cc{A}_{a}$ will also succeed with at least the same probability.
	This is true because 
	 if there exists $\cc{A}_{na}$ that  can rewind the system 
	  to a previous state, 
	  and claim a normal crash, $\cc{A}_{a}$ can use it as a  subroutine to break the scheme.
	   This implies that all existing schemes \cite{BY,Schneier-Kelsey,Logcrypt,MaTsudik} are vulnerable to adaptive crash.
	Claim 2:  SLiC that is secure against a non-adaptive adversary cannot protect against rewinding. 
	SLiC encrypts and permutes the log events randomly. However because the adversary can 
	see the content of 
	LStore after each operation,  this permutation will be reversible. 
	$\cc{A}_{a}$ can rewind the log file to any past state that it has already seen (which has a different but valid permutation), 
	remove the keys from KStore (causing the system to be stateless) and claim a normal crash.
	These two arguments are formalized 
	in   Appendix \ref{Impossibility}.
	
\begin{scriptsize}
	\begin{algorithm}[!t]
		\scriptsize
		\caption{ $Exp^{AdapCr}_{\cc{A},\Pi}(1^{\lambda},cs):$}\label{Game2}
		\begin{algorithmic}[1]
			\State $(m_1,m_2,\dots,m_{n}) \gets \cc{A}(1^{\lambda},GEN_Q(),LOG_{\Sigma,Q}(),REC_{\Sigma}(),CRASH_{\Sigma}())$		
			\State $\Sigma_{0} \gets Gen(1^{\lambda})$ \newline \indent //$\Sigma_0=[\Sigma^{K}_{0},\Sigma^{L}_{0},Cache_0]$
			\State $(\Sigma^{cr},\alpha_1,\dots, \alpha_\ell) \gets \cc{A}^{GEN_Q(),LOG_{\Sigma,Q}(),CRASH_{\Sigma}()}({\Sigma^{L}_0,m_1,\dots,m_{n}})$ \newline \indent //$\Sigma^{cr}=[\Sigma^{cr,L}, \Sigma^{cr,K},Cache^{cr}]$, $\alpha_i \in [1,n]$
			\State $\cc{R} \gets Recover(\Sigma^{cr},\Sigma_0)$
			\newline \indent $//\cc{R}=\perp \quad or \quad \cc{R}={(\rho_1,m'_1), \dots, (\rho_{n'},m'_{n'})}$
			\If{$\cc{R}=\perp$}
			\State Output $\perp$
			\ElsIf{ \newline \indent $[ \exists (\alpha_i,\rho_j): (\alpha_i=\rho_j) \wedge (m_{\alpha_i} \neq m'_j) ]  \vee \quad //Modify \newline \indent
				[\exists \alpha_i \notin  ExpSet: 
				\rho_j \neq \alpha_i, \forall j=1, \dots, n' ] \vee \quad //Delete \newline \indent
				[\cc{R}=   \Sigma'^L, \Sigma'\in Q] \quad //Rewind
				$
				\State Output \textit{Success}}
			\EndIf
			\State \textbf{end} \newline \indent

			\begin{tabular}[h]{l l}
				
				\begin{tabular}{@{}l@{}}
					//\mbox{$GEN$ runs $Gen(\cdot)$, returns $\Sigma_0$} \\
					\indent
					\textbf{$GEN_Q():$} \\ \indent
					$\Sigma'_0 \gets Gen(1^\lambda)$ \\ \indent
					$Q \gets Q \cup  \Sigma'_0$\\ \indent
					Return $(\Sigma'^{L}_{0},cache'^{L}_{0})$ \\ \indent 
				\end{tabular}
				
				&
				
				\begin{tabular}{@{}l@{}}
					//\mbox{$LOG$ runs $Log(\cdot,\cdot)$ on $m$,} \\ \mbox{ returns the state  of LStore and Cache} \\ \indent
					\textbf{$LOG_{\Sigma,Q}(m):$} \\ \indent 
					$\Sigma' \gets Log(\Sigma,m)$ \\ \indent $//\Sigma'=[\Sigma'^{L},\Sigma'^{K},Cache']$ \\ \indent
					$Q \gets Q \cup  \Sigma'$\\ \indent
					Return $(\Sigma'^{L},cache'^L)$\\ \indent
				\end{tabular}
				\\
				\begin{tabular}{@{}l@{}}
					//\mbox{$REC$ runs $Recover(\cdot,\cdot)$, returns $\cc{R}$} \\ \indent
					\textbf{$REC_{\Sigma}():$} \\ \indent
					$\cc{R} \gets Recover(\Sigma,\Sigma_0)$ \\ \indent
					Return $\cc{R}$ \\ \indent 
				\end{tabular}
				&
				\begin{tabular}{@{}l@{}}
					//\mbox{$CRASH$ crashes the $\cc{L}$, returns $\Sigma^{cr}$} \\ \indent
					\textbf{$CRASH_{\Sigma}():$} \\ \indent
					Return $(\Sigma^{cr})$ 
				\end{tabular}
			\end{tabular}
			
		\end{algorithmic}
	\end{algorithm}
\end{scriptsize}
\vspace{-0.5cm}
\remove{
\begin{proposition} \label{Strong}
 \textit{$Exp^{AdapCr}_{\cc{A},\Pi}()$} is stronger than  \textit{$Exp^{CrInt}_{\cc{A},\Pi,Crash}()$} game \cite{CrashAttack}.
\end{proposition}
}

\remove{
We use Lemma \ref{Lemma1} and Lemma \ref{Lemma2} to prove Proposition \ref{Strong}. In Lemma \ref{Lemma1} we show if a non-adaptive adversary $\cc{A}_{na}$ can break a certain scheme, an adaptive adversary $\cc{A}_{a}$ can also do that. An adaptive adversary who sees the log file after each logging operation learns the positions of each log event. If a non-adaptive adversary can successfully rewind back the system, an adaptive adversary can do so as well, since the system does not keep the last state reliably. The adaptive adversary can roll back the log file to any past state that it has seen, remove the key, and succeed without being detected. In Lemma \ref{Lemma2}, we we use SLiC \cite{CrashAttack} to show that if an existing scheme is secure against non-adaptive adversary, it can be broken by an adaptive adversary.

\begin{lemma} \label{Lemma1}
	If there exists a non-adaptive adversary $\cc{A}_{na}$ who plays the \\ \textit{$Exp^{CrInt}_{\cc{A}_{na},\Pi,Crash}()$} game \cite{CrashAttack} and successfully deletes or modifies at least one event with probability $\epsilon$ from $\cc{L}$, then there exists an adaptive adversary $\cc{A}_{a}$ who interacts with $\cc{A}_{na}$ and wins the  \textit{$Exp^{AdapCr}_{\cc{A}_{a},\Pi}()$} game with probability \textit{at least} $\epsilon$. 
	\end{lemma}

\begin{proof}
	To prove the lemma, we assume that $\cc{A}_{a}$ responds to Gen, Log, and Crash queries made by $\cc{A}_{na}$ with the help of $GEN_Q()$, $LOG_{\Sigma,Q}()$ and $CRASH_{\Sigma}()$ oracles in the \textit{$Exp^{AdapCr}_{\cc{A}_{a},\Pi}()$} game and the challenger $\cc{C}$. The goal of $\cc{A}_{a}$ is to win the \textit{$Exp^{AdapCr}_{\cc{A}_{a},\Pi}()$} game using the information outputted by $\cc{A}_{na}$.
	
	First, challenger $\cc{C}$ runs $Gen(1^{\lambda})$ and initializes LStore, KStore and cache (line 2 of \textit{$Exp^{CrInt}_{\cc{A}_{na},\Pi,Crash}()$} and \textit{$Exp^{AdapCr}_{\cc{A}_{a},\Pi}()$} ). Then,  $\cc{A}_{na}$ sends $n$ messages $m_i$, $1 \leq i \leq n$ to $\cc{A}_{a}$ to log (lines 3 to 5 of \textit{$Exp^{CrInt}_{\cc{A}_{na},\Pi,Crash}()$}). On receiving the message $m_i$, $\cc{A}_a$ adaptively calls the $LOG_{\Sigma,Q}()$ oracle, and receives the state of the LStore $\Sigma^{L}_i$ and its cache $cache^{L}_i$. Note that $\cc{A}_{na}$ can also call Gen and Crash oracles which are required to be handled by $\cc{A}_a$. In case of Gen queries, $\cc{A}_a$ calls $GEN_Q()$ oracle which initializes a new log on $\cc{L}$ and for Crash queries, $\cc{A}$ calls $CRASH_{\Sigma}()$. $\cc{A}_{na}$ removes or modifies some events, and returns the crashed state $\Sigma^{cr}$ and the positions that it has modified or deleted $\alpha_1, \cdots, \alpha_{\ell}$ (\textit{$Exp^{CrInt}_{\cc{A},\Pi,Crash}()$} line 6)). $\cc{A}_{a}$ outputs whatever $\cc{A}_{na}$ outputs and wins the game with the probability \textit{at least}\footnote{The adaptive adversary $\cc{A}_{a}$ has also seen intermediate states whereas $\cc{A}_{na}$ succeeds without observing those states. In the worst case, the extra knowledge that $\cc{A}_{a}$ has, does not help it and the success probability is $\epsilon$.} equal to $\epsilon$.
	
\end{proof}

\begin{lemma} \label{Lemma2}
	if there is a scheme $\zeta$ which provides crash integrity (according to \\ \textit{$Exp^{CrInt}_{\cc{A}_{na},\Pi,Crash}()$} game),  it can be broken by an adaptive adversary $\cc{A}_a$ who plays \textit{$Exp^{AdapCr}_{\cc{A}_{a},\Pi}()$} game.
	\end{lemma}

\begin{proof}
	We prove the lemma using SLIC \cite{CrashAttack} as $\zeta$. \remove{SLIC \cite{CrashAttack} uses a permutation on the log file; whenever a new log event is received, it  is swapped with an older event chosen randomly from the past log entries.} We follow the \textit{$Exp^{AdapCr}_{\cc{A}_{a},\Pi}()$} game on SLIC \cite{CrashAttack}. 
	First, challenger $\cc{C}$ calls $Gen(1^\lambda)$ algorithm in \cite{CrashAttack} which initializes, (i) the LStore using $Log(\cdot,\cdot)$, (ii) the KStore to hold the required keys, and (iii) the cache to be empty. It returns the LStore $\Sigma^{L}_0$ and its associated cache $cache^{L}_0=\emptyset$ to $\cc{A}_a$. $\cc{A}_a$ calls $LOG_{\Sigma,Q}()$ to log $n$ messages $m_i$, where $1 \leq i \leq n$. Each message is logged using $Log(\cdot,\cdot)$ algorithm in \cite{CrashAttack} and the LStore $\Sigma^{L}_i$ and $cache^{L}_i$ are returned to $\cc{A}_a$.  Finally, $\cc{A}_a$ compromises $\cc{L}$, rewinds its state to be as $\Sigma'_{n'}=[\Sigma^{L}_{n'},\emptyset,(cache^{L}_{n'},\emptyset)]$, where $1 \leq n' < n-cs$ and calls $CRASH_{\Sigma}()$. Then $\cc{A}$ returns the crashed state $\Sigma^{'cr}_{n'}$ and all the positions $\alpha_i$ it has removed where $\alpha_i \in \{n'+1,\cdots,n\}$ (note that $n'+cs+1,..,n$ are not in the expendable set of $\Sigma^{'cr}_{n'}$). Challenger $\cc{C}$ runs $Recover(\cdot,\cdot)$ of \cite{CrashAttack} to get $\c{R}$. Since $\Sigma'_{n'}$ is a valid state $\cc{R}$ is not $\perp$, but the $\alpha_i$s are missing from it. Therefore, adversary wins the game and successfully deletes the events from $\cc{L}$ without being detected.
	
\end{proof}	
\vspace{-0.5cm}
}
	
	\section{An adaptive crash recovery scheme} \label{RewindSecure}
	\vspace{-0.1cm}
	
	The above impossibility  result shows that if no key information can be trusted after the crash, 
	it will not be possible to distinguish 
  between an accidental 
  crash and a crash attack. 
  One may use an external reliable storage such as 
	 blockchain 
	 \cite{SHEL,Catena}. In  such an approach the blockchain will 
	store 
	data that will allow the recovery algorithm to detect a crash  state. 
	Such a solution will  have challenges 
	 including the need for a high rate  of access to blockchain.
Our goal is to design a solution without using an external 
 point of trust. 
 	\remove{
	Although the initial state of the system is stored in a trusted location and is given to verifier later, this will not contradicts our goal. For any forward secure logging scheme that generates the keys from an initial seed, this assumption is needed.
We assume there is a protected component in the KStore  that will not be damaged by a normal crash.  {\color{red}This assumption can be implemented using protected non-volatile memory but it is not enough to ensure crash integrity.}
	The protection is only for the normal working of the system, and we assume that the memory will become accessible to the adversary after the compromise. {\color{red}If adversary is non-adaptive, we can implement the system without this component, so that log events and keys can be stored on the same disk.}
	  We design a logging protocol that prevents modification, deletion, and rewind attacks by using this assumption.}
 
	\subsection{The proposed scheme}
	We build the basis of our protocol close to the PRF-chain FI-MAC protocol of Bellare and Yee \cite{BY}. We assume that each log event is appended to the end of the log with an authentication tag, a HMAC. We use PRF to evolve the keys needed for HMAC. Multiple keys can be used in our scheme to prevent rewinding, but for simplicity, we describe  the mechanism with a pair of keys; the keys are used as below:
	
	\textbf{Double evolving key mechanism.} To prevent rewinding, we generate two key sequences that are evolved with different rates. One of the keys evolves per log entry to prevent re-ordering and log modification, and guarantees forward security. We call this key as sequential key. The second key, which is called state-controlled key, is updated slower relative to the first key at random points of time. This key is used to reduce the probability that key is removed from the disk after a normal crash. \remove{We give the rationale behind using this key later. Each state-controlled key lasts for one epoch, and whenever this key is updated we say a new epoch is started.}
	
	 For each log entry, we use a choice function $CF()$ which receives the index of the new log entry and the current state-controlled key $CF(k'_{j-1},i)$ and outputs 0 or 1. If the output is 0 we use the sequential evolving key and if it is 1 we use state-controlled key to compute the HMAC.
	 
	\remove{\textbf{Security Rationale:} As stated earlier, the sequential evolving key is used to provide forward security, however, it is not sufficient to prevent crash attack also. {\color{red} Recall that in the update procedure of key, (i) the current key $k_{i-1}$ goes to cache and updated to a new key $k_i$, (ii) $k_i$ is written to KStore and (iii) $k_{i-1}$ is deleted. Assuming that the logging system allows reordering of this process, deletion of $k_{i-1}$ can before $k_i$ is written on KStore. A crash at this moment leads to a state where the keys are missing from both KStore and cache. We denote the probability that this bad event happens with $\alpha$. If key is evolved per log entry we expect KStore to be empty after a crash with probability $\alpha$. As a result, nothing stops the adversary from truncating the log file to \textbf{any} of the intermediate states that it desires, since he can successfully cover the attack by removing the KStore. The success probability of this attack is $1$ and verifier has no chance to distinguish the attack from a normal crash. Now we take a look at the double evolving key; the state-controlled key is updated with probability $\frac{1}{m}$ per each log entry. This reduces the amount of key evolving and hence the chance that key is removed by a normal crash. If crash happens when key is not in the evolving process, it will not removed in any way. \remove{We can say that the state controlled key is updated, on average, once per $m$ log entries. This means that the probability of missing the key is reduced to $\alpha$ in the window of $m$ log entries.} Thus, the success probability of attacker to completely remove the KStore will be $\alpha \times \frac{1}{m}$.}}
	 
	 We require that state-controlled key evolves randomly, so attacker cannot guess or estimate the positions that KStore is updated.
	For this, we use a choice function $CF()$ which gets a random input and outputs 0 or 1. \remove{The reason for using a random output is to prevent the attacker from learning the update points. If the state-controlled key is updated in predetermined points then adversary can use this knowledge to rewind the log file to exactly that point where the success probability of deleting the key is high.} Thus, $CF()$ has the following properties:(i) by observing the input/output of $CF()$, adversary cannot predict the previous outputs; (ii) $CF()$ outputs $1$ with probability $\frac{1}{m}$.
	With this setting, we can say the state-controlled key is ``$\epsilon\_$stable'' \textit{relative to} the sequential evolving key.
	\\
	\textbf{Definition 3.} A key mechanism is called ``$\epsilon\_$stable'' if the probability that the key is removed by a normal crash is $\epsilon$.

	 We use $H(k'_{j-1},i) < T$ as our choice function $CF()$, where $H$ is a cryptographic hash function like SHA-256, $k'_{j-1}$ is the current state-controlled key, $idx$ is the index of the log entry that is going to be stored in LStore, and $T$ is a target value. $T$ is chosen such that the above equation holds with rate $\frac{1}{m}$ on average,  that is the state-controlled key is evolved with probability $\frac{1}{m}$ at each log entry. We show in Section \ref{Implementation}, how to determine $T$ for a given $m$ and prove the security of our scheme using this choice function in Theorem \ref{Theo}. 
	 A similar 
	 choice function  has been used 
	 in Bitcoin \cite{Bitcoin}. 
	 
	 Note that even by choosing a random choice function, adversary can find the index of the event corresponding to the last usage of state-controlled key. This can be done by exhaustive search in the tail end of the log file, using the HMAC on every event with the state-controlled key seen in the KStore. To also prevent this attack we require that the HMAC of the events where state-controlled key is updated should have a source of randomness. We use the state-controlled key before updating as this randomness and concatenate it with the event $m_i$, i.e. $h_i=HMAC_{k'_j}(m_i,k'_{j-1})$. Remember that KStore contains this key during the evolving process and removes it later on, so attacker cannot find it after compromise.
	 \remove{With this setting, If attacker compromises the system, it cannot find the previous state-controlled key to check the choice function on every event.}
	 It also worth to mention that adversary can only succeed in rewinding $\cc{L}$ to an old state if it forges the state-controlled key associated with that state. By using PRF to generate the key sequences this probability is negligible.
	 
	 \noindent \textbf{Details.}
	 Log file consists of a list of events $S=\{s_1, s_2, \dots\}$, where each element $s_i$ corresponds to one event. Each new event, $m_i$, is concatenated with a HMAC, $h_i$, and appended to $S=S||s_i$, and $s_i=(m_i,h_i)$, where $,$ denotes the concatenation and $||$ represents appending. The system algorithms are described in Algorithms \ref{Gen}, \ref{Log}, \ref{Recover}. 
	
	\vspace{-0.8cm}
	\begin{minipage}[!tb]{0.47\textwidth}
		\begin{minipage}{\textwidth}
			\begin{algorithm}[H]
				\scriptsize
				\caption{Gen$(1^{\lambda})$}\label{Gen}
				\hspace*{\algorithmicindent} \textbf{Input: Security parameter $\lambda$} \\
				\hspace*{\algorithmicindent} \textbf{Output: Initial state $\Sigma_0$}
				\begin{algorithmic}[1]
					\State $k_0 , k'_0 \gets \{0,1\}^\lambda$
					\State $\chi, \chi' \gets \{0,1\}^\lambda$
					\State Let $S \gets init\_message$ //S is a dynamic array
					\State \textbf{Output} $\Sigma_0=(\Sigma^{K}_{0},\Sigma^{L}_{0},Cache_0)$ \newline // where $\Sigma^{K}_{0}=(k_0,k'_0)$, $\Sigma^{L}_{0}=(S)$, and $Cache_0=\emptyset$;
				\end{algorithmic}
			\end{algorithm}
		\end{minipage}
		\vfill
		\vspace{1.2cm}
		\begin{minipage}{\textwidth}
			\begin{algorithm}[H]
				\scriptsize
				\caption{Log$(\Sigma_{i-1},m_i)$}\label{Log}
				\hspace*{\algorithmicindent} \textbf{Input: old state $\Sigma_{i-1}$, log event $m_i$} \\
				\hspace*{\algorithmicindent} \textbf{Output: updated state $\Sigma_i$}
				\newline \textit{ $\Sigma_{i-1}=[\Sigma^{K}_{i-1},\Sigma^{L}_{i-1},Cache_{i-1}]$, \newline $\Sigma^{K}_{i-1}=(k_{i-1},k'_{j-1})$ and $\Sigma^{L}_{i-1}=(S)$, $|S|=i-1$;} \newline \indent
				//new log event $m_i$ arrives
				\begin{algorithmic}[1]
					\State $k_{i}=PRF_{k_{i-1}}(\chi)$ \newline
					//Compute the choice function
					\If{$CF(k'_{j-1},i)=1$}
					\State $k'_{j}=PRF_{k'_{j-1}}(\chi')$
					\State $h_i=HMAC_{k'_j}(m_i,k'_{j-1})$
					\Else
					\State $h_i=HMAC_{k_i}(m_i)$
					\EndIf
					\State $s_i=(m_i,h_i)$
					\State $S=S||s_i$
					\State \textbf{Output} $\Sigma_i=[\Sigma^{K}_{i},\Sigma^{L}_{i},Cache_{i}]$ \newline //where $\Sigma^{K}_{i}=(k_{i},k'_{i})$ and $\Sigma^{L}_{i}=(S)$
				\end{algorithmic}
				\vspace{-1mm}
			\end{algorithm}
		\end{minipage}
	\end{minipage}
	\hfill
	\begin{minipage}{0.47\textwidth}
		\begin{algorithm}[H]
			\scriptsize
			\caption{Recover$(\Sigma,\Sigma_0)$}\label{Recover}
			\hspace*{\algorithmicindent} \textbf{Input: State $\Sigma$ to check, initial state $\Sigma_0$} \\
			\hspace*{\algorithmicindent} \textbf{Output: Recovered log events $\{(\rho_i,m'_i), 1 \leq i \leq n'\}$}\newline \textit{//Let $s_i=(m_i,h_i)$}
			\begin{algorithmic}[1]
				\State $\cc{R}=\emptyset$ (recover set), $ExpSet=\emptyset$ (expendable set) \newline
				//compute keys
				\For{$i=1 \quad to \quad n'+cs$}
				\State $k_{i}=PRF_{k_{i-1}}(\chi)$
				\State $\mathds{KS} \cup (i,k_i,\perp)$
				\If{$CF(k'_{j-1},i)=1$}
				\If{$i> n'$}
				\State $K' \cup k'_{j-1}$
				\EndIf
				\State $k'_{j}=PRF_{k'_{j-1}}(\chi')$
				\State $\mathds{KS} \cup (i,k'_j,k'_{j-1})$
				\State Remove  $(i,k_i,\perp)$ from $\mathds{KS}$
				\EndIf
				\EndFor 
				\State $K' \cup k'_{j}$\newline
				//verify HMACs using the key set $\mathds{KS}$ which is of form  $(i,k_i,\kappa_i)$ \newline where $\kappa_i$ is $\perp$ or $k'_{j-1}$
				\For{$i=1 \quad to \quad n'$}
				\If{$HMAC_{k_i}(m_i,\kappa_i)=h_i,  k_i, \kappa_i \in \mathds{KS}$}
				\State Update $\cc{R} \cup (i,m_i)$
				\EndIf
				\EndFor\newline //compute expendable logs
				\For{$i=n'-cs+1 \quad to \quad n'+cs$}
				\State $ExpSet \cup i$
				\EndFor\newline 
				//Plausibility check
				\If{$ X \notin K'$}
				\State Outputs $\perp$
				\ElsIf{$|R| < 1 \lor \exists i \in \{1, \dots, n'\}: \{(i,.) \notin \cc{R} \wedge i \notin ExpSet\}$}
				\State Outputs $\perp$
				\Else
				\State Outputs $\cc{R}$
				\EndIf
			\end{algorithmic}
		\end{algorithm}
	\end{minipage}

	\textbf{$Gen(1^{\lambda})$:} We use a PRF to generate and evolve the required keys. Let $PRF:\cc{K} \times \cc{Y} \rightarrow \cc{Z}$ be a function where $\cc{K}$ is the key space, $\cc{Y}$ is the domain and $\cc{Z}$ is the range, all are determined by security parameter $\lambda$. $PRF(k,\cdot)$ is often denoted by $PRF_k(\cdot)$.
	 There are two initial keys, one for computing sequential keys, denote it with $k_0$, and one for computing state-controlled keys, denote it with $k'_0$.
	  All the secrets are shared with the verifier at the beginning of the log file and they are removed from system after updating it to the next key. Note that PRF also takes a second input which does not need to be secret and it is stored at the logging device and also shared with  the verifier (We represent these inputs with $\chi$ and $\chi'$). PRF evolves as follows: $k_{i}=PRF_{k_{i-1}}(\chi)$ (similarly $k'_{i}=PRF_{k'_{i-1}}(\chi')$). State-controlled key is initially $k'_0$. $S$ is initialized with a specific message, which represent the information of log initialization such as the date, size, device id and etc; this is to prevent total deletion attack. We use $Log(.,.)$ algorithm that is described next to log the initial event, $init\_message$.  We assume that cache is initially empty, and the state of the $\cc{L}$ is  $\Sigma_0=(\Sigma^{K}_{0},\Sigma^{L}_{0},Cache_0)$, where the state of the KStore is  $\Sigma^{K}_{0}=(k_0,k'_0)$ and the state of the LStore is $\Sigma^{L}_{0}=(S)$.\\
	\textbf{$Log(\Sigma_{i-1},m_i)$:} Each log entry is of the form $s_i=(m_i,h_i)$ and it is appended to the dynamic array $S= S \cup (s_i)$, where $h_i$ is the HMAC of $m_i$ using either $k_i$ or $X=k'_{j}$.	For each log entry at index $i$, $CF(k'_{j-1},i)$ is calculated; if the output is 1 then $k'_{j-1}$ is updated to $k'_j$ and HMAC of $m_i$ is computed using $k'_{j}$ and $k'_{j-1}$, $h_i=HMAC_{k'_{j}}(m_i,k'_{j-1})$, otherwise $k_{i}$ is used for computing the HMAC, $h_i=HMAC_{k_i}(m_i)$. 
	Figure \ref{RewindPro} shows how Log algorithm works. When $CF()$ outputs 1, the corresponding log entry uses the state-controlled key. \\
	\textbf{$Recover(\Sigma,\Sigma_0)$:} Verifier $\cc{V}$ receives the state $\Sigma$ consisting of $n'$ log events (possibly crashed) in LStore. $\cc{V}$ knows the size of each log entry and can parse the LStore to $n'$ log entries. $\cc{V}$ also knows the initial state of the $\cc{L}$, so re-computes all the random coins and the keys and stores the keys in the set $\mathds{KS}$. $\cc{V}$ can verify the HMAC of each log entry using either the sequential key or the state-controlled key depending on the output of choice function $CF()$. The indexes between $n'-cs+1$ to $n'+cs$ are considered as expendable set. $\cc{V}$ also finds the set of possible  state-controlled 
	keys that  may be 
	in the KStore at the time of 
	 the crash.
	 After the  crash one such  key will be in  
	 KStore 
	  (lines 6-11). 
	  %
	If the size of the log file is $n'$,   
	  last key that has been updated before $n'$ will be in the KStore (because logging  is immediately after key update).
	  Since it is possible to have a situation where the  the cache contains a new event and  KStore contains the updated state-controlled key, but the corresponding event has not been written to the log file, we will have the following.
	  For a log file of length $n'$ the  key set $K'$ consists of   (i) state-controlled keys that are generated between index $n'$ and $n'+cs$ (future keys), and (ii) the last state-controlled key generated 
	  the event  $n'$.
 Figure \ref{SetK} shows how to find this key set. In this example, $cs=4$ and the size of log file $n'=11$, key $k'_3$ is associated with the last stored event, and $k'_4$ is associated with an unwritten event. So, $K'=\{k'_3,k'_4\}$. 
 
  \textit{Plausibility check.} If the state-controlled key, $X$, is not in the key set $K'$ then we output $\perp$ meaning untrusted log. If the number of recovered events are less than 1 there is a total deletion attack. If there is an index which is neither in the expendable set nor in the recovered set, then there is a deletion/modification attack. Otherwise, $Recover(.,.)$ outputs index-message pairs.

	\begin{figure}[!t]
		\begin{minipage}[t]{\textwidth}
		\centering
		\small
		\setlength{\belowcaptionskip}{-4mm}
		\includegraphics[width=0.85\textwidth]{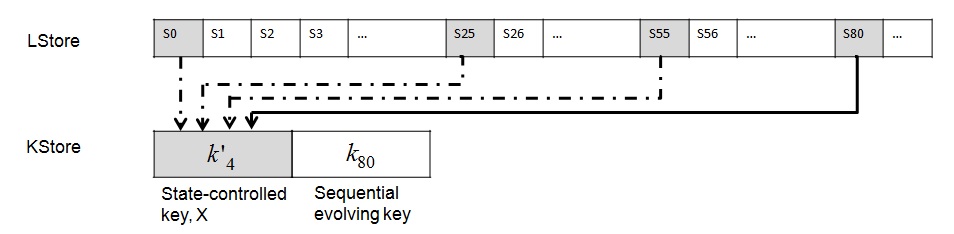}
		\caption{\small The logging operation using double evolving keys; log entry $s_{55}$ uses $k'_3$ and then $k'_3$ evolves to $k'_4$ which is used for $s_{80}$.} \label{RewindPro}
		\vspace{-0.5mm}
			\end{minipage}
	\end{figure}
\vfill
	\begin{figure}[!t]
		\begin{minipage}[t]{\textwidth}
	\centering
	\setlength{\belowcaptionskip}{-4ex}
	\includegraphics[width=0.75\textwidth]{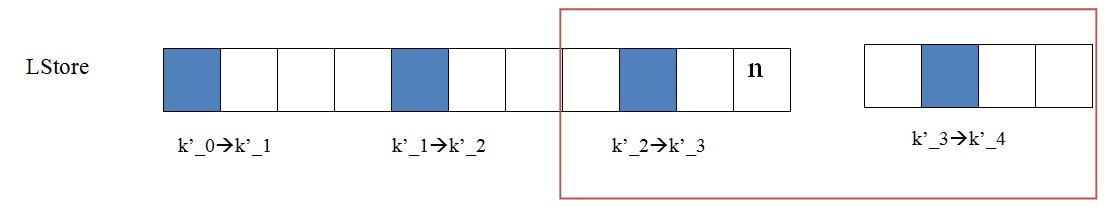}
	\caption{\small Estimating the set of possible keys. Blue cells are the locations that state-controlled key has been updated and the red rectangular shows the Expset.} \label{SetK}
\end{minipage}
\end{figure}

 
 	 \noindent

	\section{Security and efficiency} \label{SecurityAnalysis}
First we give two lemmas;  Lemma \ref{Key} shows the stability of our mechanism and Lemma \ref{lemset} shows the relation between $\frac{1}{m}$, and the number of log entries that adversary can truncate from the end of log, $\ell$. Then, we give Theorem \ref{Theo}, for which we follow the \textit{$Exp^{AdapCr}_{\cc{A}_{a},\Pi}()$} game and assume that after compromise, attacker has access to everything in the system including the choice function, the keys in the KStore and the key cache. (Please see the details of proofs in Appendix \ref{Proofs}.)

\begin{lemma} \label{Key}
	The double evolving key mechanism is $\frac{\alpha^2}{m}\_$stable if the choice function $CF()$ outputs 1 with probability $\frac{1}{m}$.
\end{lemma}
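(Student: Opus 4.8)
The plan is to unfold Definition 3 for the double-evolving-key mechanism: being $\epsilon\_$stable means that, with probability $\epsilon$, a normal crash leaves the KStore with no usable key, so that the system becomes stateless. Since the KStore simultaneously holds the sequential key $k_i$ and the state-controlled key $X=k'_j$, I would argue that a crash empties the KStore only when \emph{both} keys are lost, and bound the probability of this joint event. The point of insisting on both is that a surviving state-controlled key pins down the current epoch, so the plausibility check of $Recover$ (the test $X\notin K'$) would expose any rewind past it; only a fully emptied KStore is exploitable.

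First I would recall the per-key loss model. Updating one key writes the new value and deletes the old one as separate disk operations, which the operating system may reorder; a crash landing in the window after the deletion but before the write destroys that key, an event of probability $\alpha$. The sequential key is updated at \emph{every} log entry, so at the crash point it is always mid-update and is lost with probability $\alpha$. The state-controlled key, however, is touched only when $CF(k'_{j-1},i)=1$, which by hypothesis happens with probability $\tfrac{1}{m}$; when $CF$ outputs $0$ it sits untouched in the KStore and certainly survives, so the KStore cannot be empty in that case. I would therefore condition the whole analysis on the event $CF=1$, which carries the factor $\tfrac{1}{m}$.

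Next I would combine the pieces. Conditioned on $CF=1$, both keys are updated inside the same $Log$ call, and I would model their two reorder windows as independent, each caught by the crash with probability $\alpha$. Hence
\[
\Pr[\text{KStore empty}] = \Pr[CF=1]\cdot\alpha^2 = \frac{1}{m}\cdot\alpha^2 = \frac{\alpha^2}{m},
\]
which by Definition 3 is exactly the claimed $\tfrac{\alpha^2}{m}\_$stability, and is a strict improvement over the naive $\tfrac{\alpha}{m}$ obtained by tracking the state-controlled key alone.

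The step I expect to be the main obstacle is the modeling that underlies the two factors of $\alpha$: justifying that the exploitable bad event is ``both keys lost'' rather than merely ``state-controlled key lost,'' and that, within a single $Log$ call, the reorder windows of the sequential and state-controlled updates may be treated as independent Bernoulli($\alpha$) events. Making this rigorous requires being explicit about the crash model --- that a crash is only useful for rewinding when the KStore retains no state at all, and that the OS schedules each key's write/delete pair independently of the other's --- rather than any further calculation.
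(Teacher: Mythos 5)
Your proposal is correct and follows essentially the same route as the paper: the bad event is that \emph{both} the sequential key (always mid-update, hence lost with probability $\alpha$) and the state-controlled key (at risk only when $CF=1$, i.e.\ with probability $\frac{1}{m}$, and then lost with probability $\alpha$) are destroyed, which under independence gives $\frac{1}{m}\cdot\alpha^2=\frac{\alpha^2}{m}$. If anything, your writeup is more consistent with the lemma statement than the paper's appendix proof, which ends with the expression $\alpha\times\frac{1}{m}$; the two factors of $\alpha$ appear only in the paper's main-text discussion of both keys being removed.
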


\remove{
\begin{proof}
	The proof is straightforward. Note that key can be removed from KStore if key is being updated at the moment that crash happens. (Recall that in the update procedure of key, (i) the current key $k_{i-1}$ goes to cache and updated to a new key $k_i$, (ii) $k_i$ is written to KStore and (iii) $k_{i-1}$ is deleted. Assuming that the logging system allows reordering of this process, deletion of $k_{i-1}$ can happen before $k_i$ is written on KStore. A crash at this moment leads to a state where the keys are missing from both KStore and cache). To prove the lemma, we should compute $P(\mbox{deletion} \wedge \mbox{key updates})$. Let $\alpha$ denotes the probability that a key is deleted from KStore (because of re-ordering procedure in the system). If state-controlled key evolves with probability $\frac{1}{m}$ at each event, the probability that key is removed during a normal crash will be  $\alpha \times \frac{1}{m}$.
\end{proof}
}

 Let $\alpha$ denotes the probability that a key is deleted from KStore (because of re-ordering procedure in the system). If state-controlled key evolves with probability $\frac{1}{m}$ at each event, the probability that both 
 sequential 
 and state-controlled keys, are removed during a normal crash will be  $\alpha \times \frac{\alpha}{m}$.
 Note that by choosing large values for $m$ this probability becomes negligible.
   If we use two  
   independent state-controlled keys using 
  different PRFs, and evolving 
  at different rates 
(
$\frac{1}{m_1}$ and 
$\frac{1}{m_2}$, respectively),   then the probability that  after a normal crash, the sequential key and both state-controlled keys are missing 
 will be reduced to $\alpha \times \frac{\alpha}{m_1}\times \frac{\alpha}{m_2}$. This method can be used to dramatically decrease the chance of key removal key in a normal crash if we do not want to increase the value of $m$ directly. 

Note that we cannot unlimitedly increase $m$. If $m$ is chosen to be so large, attacker may want to keep the state-controlled key untouched and truncate the log file for a number of events, $cs+\ell$, such that the key is also valid in the truncated state (note that we are interested in the value of $\ell$, since we do not have any guarantee for $cs$ events in a normal crash anyway). Consider that attacker compromises the logging device at state $n$ where the set of possible keys is $K'_n$ and then cuts off $cs+\ell$ log events from the end which results in the malicious state $n'$; the set of possible state-controlled keys is denoted by $K'_{n'}$ at state $n'$. If $K'_n$ and $K'_{n'}$ have intersection and the key in the KStore is one of the keys in the intersection of the two key sets then verifier cannot distinguish the crash attack from the accidental crash and hence crash attack ends up successfully. The value of $\ell$ is important to the security of our system and our goal is to reduce $\ell$. In the following lemma \ref{lemset}, we find the success probability of attacker in the attack mentioned above; attacker knows the key evolving probability $\frac{1}{m}$ and the size of the log file.

\begin{lemma}\label{lemset}
	Assuming that the evolving probability of state-controlled key is  $\frac{1}{m}$, and attacker compromises the device at sate $\Sigma_n$, truncates $cs+\ell$ events from the log file, results in malicious state $\Sigma_{n'}$, and keeps the key in KStore untouched. The success probability of such attacker is bounded to $P_s=(1-\frac{1}{m})^\ell\times\frac{1}{\lfloor\frac{cs}{m} \rfloor+1}$. 
\end{lemma}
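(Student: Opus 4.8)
The plan is to reduce the lemma to a purely combinatorial statement about the random evolution of the state-controlled key and then bound the resulting probability by two essentially independent factors. Since the adversary keeps the KStore untouched, the key it presents after the claimed crash is the genuine current state-controlled key, which I will write as $X=k'_{N(n)}$, where $N(x)$ counts the state-controlled evolutions among the first $x$ log entries. By the plausibility check of $Recover(\cdot,\cdot)$ (Algorithm~\ref{Recover}, the test ``$X\notin K'$''), the truncation to $\Sigma_{n'}$ goes undetected exactly when $X\in K'_{n'}$; and because $X=k'_{N(n)}$ is always the first element of $K'_n$, this is equivalent to $X\in K'_n\cap K'_{n'}$, matching the intersection condition stated just before the lemma. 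First I would therefore rewrite $P_s$ as $\Pr[k'_{N(n)}\in K'_{n'}]$, taken over the evolution randomness, which the choice function $CF(\cdot,\cdot)$ lets us model as independent Bernoulli$(1/m)$ trials at each index.

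Next I would make $K'_{n'}$ explicit. Using its two-part description (the last state-controlled key generated up to $n'$, together with the ``future'' keys born in the window $(n',n'+cs]$) we have $K'_{n'}=\{k'_{N(n')},k'_{N(n')+1},\dots,k'_{N(n'+cs)}\}$, so $k'_{N(n)}\in K'_{n'}$ iff $N(n')\le N(n)\le N(n'+cs)$. Since $n=n'+cs+\ell$, the lower inequality is automatic, and the binding event is $N(n)=N(n'+cs)$, i.e.\ \emph{no} evolution occurs in the $\ell$-entry gap $(n'+cs,\,n]$; this gap lies strictly beyond the expendable window $[n'-cs+1,n'+cs]$ of $\Sigma_{n'}$, which is exactly why deleting these $\ell$ entries is a genuine (non-expendable) truncation. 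Its probability is $(1-\tfrac1m)^{\ell}$, the first factor. To obtain the second factor I would argue that plain membership is not the operative success criterion: for the crash to be indistinguishable from an accidental one, the KStore key must match the specific key the verifier reconstructs for the (unknown) lost-suffix length, and within the $cs$-window the evolutions cut $[n',n'+cs]$ into at most $\lfloor cs/m\rfloor+1$ segments, one per candidate key of $K'_{n'}$; a held key conditioned to lie in the window coincides with the required candidate with probability $\tfrac{1}{\lfloor cs/m\rfloor+1}$. Multiplying, and using that the ``gap'' trials and the ``in-window'' trials range over disjoint indices and are hence independent, yields $P_s=(1-\tfrac1m)^{\ell}\cdot\tfrac{1}{\lfloor cs/m\rfloor+1}$.

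The step I expect to be the main obstacle is justifying the factor $\tfrac{1}{\lfloor cs/m\rfloor+1}$ rigorously, since the $Recover(\cdot,\cdot)$ plausibility check as written only tests membership in $K'_{n'}$, and membership alone gives just $(1-\tfrac1m)^{\ell}$. Closing this gap requires either tightening the acceptance test so that only the boundary-consistent candidate passes, or modelling the key a normal crash deposits as (near-)uniform over the $\lfloor cs/m\rfloor+1$ candidates and requiring the adversary's fixed key to coincide with it. A further care point is that the candidate count is itself random (it is Binomial$(cs,1/m)$) and the inter-evolution segments are only geometric on average, so the clean bound should be obtained by conditioning on the evolution count attaining its typical value $\lfloor cs/m\rfloor$ and checking the adversary's advantage is maximised there; this is consistent with the lemma asserting that $P_s$ is only \emph{bounded} by the stated quantity. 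Finally I would invoke the HMAC randomness $h_i=HMAC_{k'_j}(m_i,k'_{j-1})$ together with the PRF/hash assumptions---exactly as in Lemma~\ref{Key}---to ensure the adversary cannot locate the evolution points, so that it is forced to commit to a fixed truncation depth and cannot improve on the two-factor bound.
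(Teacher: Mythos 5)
Your derivation of the first factor is exactly the paper's, obtained more cleanly. The paper splits the intersection event into Case 1 (no evolution anywhere in $(n',n]$, probability $(1-\frac{1}{m})^{cs+\ell}$) and Case 2 (at least one evolution in $(n',n'+cs]$ and none in $(n'+cs,n]$, probability $[1-(1-\frac{1}{m})^{cs}](1-\frac{1}{m})^{\ell}$), and sums these to $(1-\frac{1}{m})^{\ell}$; your single event ``no evolution in the $\ell$-entry gap'' is precisely the disjoint union of those two cases, so the computations coincide. Note also that your reduction to this event is sound because $K'_n\cap K'_{n'}$ can contain at most the one key $k'_{N(n)}$: every other element of $K'_n$ is born after $n$, hence after $n'+cs$, and so cannot lie in $K'_{n'}$.

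The difference is the second factor, and there the paper does exactly your option (b): its proof postulates that at compromise time the key residing in KStore is uniform over the candidate set $K'_n$, whose size it takes to be $\lfloor cs/m\rfloor+1$ (justified by the cache effect, i.e.\ a state-controlled key may already have been evolved for an event that never left the cache), and success then additionally requires this uniform draw to land on the unique key in the intersection, contributing $\frac{1}{\lfloor cs/m\rfloor+1}$. So the obstacle you flag is resolved in the paper by a stochastic modeling assumption about which key a crash leaves behind, not by tightening the acceptance test: the plausibility check in Algorithm~\ref{Recover} remains pure membership, exactly as you say. Your accompanying criticism is fair: in the game of Algorithm~\ref{Game2} every LOG query completes before CRASH is invoked, so under that reading KStore deterministically holds $k'_{N(n)}$ and membership alone governs acceptance, giving the larger bound $(1-\frac{1}{m})^{\ell}$; moreover the paper neither performs your conditioning-on-the-typical-count argument nor treats the Binomial fluctuation of $|K'_n|$ --- it simply substitutes the average size. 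In short, your proposal reconstructs the paper's argument, and the step you single out as the ``main obstacle'' is exactly the point where the paper's proof rests on an assumption external to the protocol as formalized.
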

\remove{
\begin{proof}
	Assume that the set $K'_n$ represents the possible state-controlled keys after crash in state $\Sigma_n$, which consists of the last generated key before index $n$ and all keys from $n$ to $n+cs$. The size of $K'_n$ is on average equal to $\lfloor\frac{cs}{m}\rfloor+1$ based on the frequency of key evolving.
	When adversary compromises the device, one of the keys in $K'_n$ resides in KStore; each key has probability $\frac{1}{|K'_n|}$ to be in KStore. Assume that adversary wants to keep that key, and truncate $cs+\ell$ entries from the log. Note that adversary does not know the events that corresponds to the key evolving as we randomized the HMAC using the previous key. If the key which is in the KStore is also in the possible key set at state $n'$, $K'_{n'}$, attacker succeeds. To find the success probability of attacker we need to first consider all the cases that $K'_n$ intersects with $K'_{n'}$, and calculate their probability. (Remember that the new log file has length of $n'=n-cs-\ell$, and the ExpSet for the malicious state $\Sigma_{n'}$ is between $n-2cs-\ell+1$ to $n-\ell$.)
	
	There are two possible cases, in each of them there is a possibility that we can have at least one state-controlled key which is in both possible key sets of $K'_n$ and $K'_{n'}$.
	\begin{itemize}
		\item Case 1: The last key evolved before $n$ is also the last key evolved before $n'$, so both key sets have the same last key. 
		\item Case 2: The last key evolved before $n$ is not the same as the last key evolved before $n'$, but it evolves between $n'+1$ to $n'+cs$, so it is also in $K'_{n'}$ .
		
	\end{itemize}  
	
	Now, we find the probability of each case:
	
	Case 1 implies that there is no evolving between $n'$ to $n$. The length of this interval is $cs+\ell$. This probability equals to $P_1=(1-\frac{1}{m})^{cs+\ell}$.
	
	Case 2 implies that we have 1 key evolving between $n'+1$ and $n'+cs$ (the length of interval is $cs$) and there is no key evolving between $n'+cs+1$ till $n$ (the length of interval is $\ell$); This leads to the following probability:
	
	\noindent
	$P_2=[1-(1-\frac{1}{m})^{cs}](1-\frac{1}{m})^{\ell}=(1-\frac{1}{m})^{\ell}-(1-\frac{1}{m})^{cs+\ell}$
	
	Attacker succeeds if either case 1 or case 2 happens and the key in the KStore is the one which is in the intersection of the two key sets. So, the success probability of attacker is bounded to: $P_s=(P_1+P_2)\times \frac{1}{|K'_n|}$
	
\noindent	
By substituting $P_1$ and $P_2$ and by considering $|K'_n|= \frac{cs}{m}+1$, we get:

\noindent	
$P_s=(1-\frac{1}{m})^\ell\times \frac{1}{\lfloor\frac{cs}{m}\rfloor+1}$
\end{proof}
}


\begin{theorem}\label{Theo}
	 Our construction provides $[\epsilon_{PRF}(\lambda),\epsilon_{PRF}(\lambda),f(n,n',\ell,cs,\lambda)]$-Crash Integrity against an adaptive attacker $\cc{A}$, where PRF-HMAC is $\epsilon_{PRF}(\lambda)\_$secure, $\ell$ is the number of events adversary wants to delete, $cs$ is the cache size, $n$ is the size of log file at state $\Sigma_n$, $n'$ is the number of log entries returned by adversary in the malicious state $\Sigma_{n'}$, $\lambda$ is the security parameter, and $f()$ is as follows:
	
	\begin{minipage}{\textwidth}
	\mathleft
	\begin{equation}
	\footnotesize
	f=\begin{cases}
	0, & \text{if $n'<1$}\\ max\{\epsilon_{PRF}(\lambda),(1-\frac{1}{m})^\ell\times \frac{1}{\lfloor\frac{cs}{m}\rfloor+1}\}, & \text{otherwise}\\
	\end{cases}
	\end{equation}
	\end{minipage}
	
\end{theorem}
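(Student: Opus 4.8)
The plan is to bound separately the three ways $\cc{A}$ can win $Exp^{AdapCr}_{\cc{A},\Pi}$ (lines 5--8 of Algorithm \ref{Game2}) --- \emph{Modify}, \emph{Delete}, and \emph{Rewind} --- and show that each falls under one of the three entries of the claimed tuple, with the combined game advantage dominated by $f$. I would first dispose of the degenerate branch $n'<1$: the plausibility check ($|R|<1 \Rightarrow \perp$), together with the \textit{init\_message} seeded into $\Sigma^L_0$, forces $Recover$ to return $\perp$ on any total-deletion state, so that state is never accepted and the success probability is exactly $0$, matching the first case of $f$. The rest assumes $n'\geq 1$.

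For the \emph{Modify} case, every $\alpha_i$ output by $\cc{A}$ lies outside $ExpSet$, so the targeted entry was authenticated under a key that $Log$ has already evolved away before the crash --- either a sequential key $k_{\alpha_i}$ or a state-controlled key $k'_j$ together with the previous $k'_{j-1}$ folded into the randomized tag. By forward security of the two PRF chains, none of these is recoverable from the post-compromise KStore, so a recovered pair $(\alpha_i,m'_j)$ with $m'_j\neq m_{\alpha_i}$ amounts to an HMAC forgery under a hidden key. I would convert such an $\cc{A}$ into a distinguisher against the relevant PRF-HMAC chain --- the sequential chain or the state-controlled chain, which accounts for the two $\epsilon_{PRF}(\lambda)$ entries of the tuple --- by embedding the challenge at the targeted index and simulating every other entry honestly; crucially, the adaptive LStore transcripts the game hands back are reproducible by the reduction, so adaptivity buys nothing. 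The same reduction disposes of a \emph{Delete} at an index $\alpha_i\leq n'$: the check $(i,\cdot)\notin\cc{R}\wedge i\notin ExpSet\Rightarrow\perp$ catches it unless a surviving entry has been tampered, which is again an HMAC forgery bounded by $\epsilon_{PRF}(\lambda)$.

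The substantive case is tail truncation --- a \emph{Delete} with $\alpha_i>n'+cs$, or a \emph{Rewind} with $\cc{R}=\Sigma'^L$ for some $\Sigma'\in Q$. Here $Recover$ rejects unless the state-controlled key $X$ surviving in KStore belongs to the plausible set $K'$ computed for a length-$n'$ file, and I would split the event accordingly. If the KStore key is \emph{not} consistent with state $\Sigma_{n'}$ yet $\cc{A}$ still makes the check pass, then $\cc{A}$ has produced a state-controlled key accepted for $\Sigma_{n'}$ that was never generated, i.e. a PRF forgery, bounded by $\epsilon_{PRF}(\lambda)$. Otherwise $\cc{A}$ leaves the genuine KStore key untouched and wins only if that key lies in $K'_n\cap K'_{n'}$; this is exactly the event analyzed in Lemma \ref{lemset}, of probability $(1-\tfrac{1}{m})^{\ell}\cdot\tfrac{1}{\lfloor cs/m\rfloor+1}$. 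Taking the maximum gives the second branch of $f$.

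The main obstacle is this truncation bound, where two points need care. First, I must justify that $\cc{A}$ cannot steer the truncation point toward a position of high overlap: by property (i) of $CF()$ (unpredictability of past outputs) and the randomization of the update tag with $k'_{j-1}$, the adversary cannot locate where the state-controlled key was last evolved, so the factor $\tfrac{1}{\lfloor cs/m\rfloor+1}$ --- the chance of guessing which key of $K'_n$ actually sits in KStore --- survives even for the optimal adaptive strategy. Second, I would invoke Lemma \ref{Key} to confirm that a normal crash empties KStore only with the negligible stability probability $\alpha^2/m$, so the key-plausibility check is almost always available and does not itself open a stateless shortcut. Summing the three cases and absorbing the negligible PRF terms yields the tuple $[\epsilon_{PRF}(\lambda),\epsilon_{PRF}(\lambda),f(n,n',\ell,cs,\lambda)]$, completing the argument.
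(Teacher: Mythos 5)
Your proposal is correct and takes essentially the same approach as the paper's proof: the same case split into modification and internal deletion (reduced to PRF-HMAC forgery under an evolved-away key, giving $\epsilon_{PRF}(\lambda)$), total deletion (probability $0$ via the plausibility check and the initial event), and truncation/rewind (split on whether the KStore key is forged --- bounded by $\epsilon_{PRF}(\lambda)$ --- or left intact --- bounded by Lemma \ref{lemset}), yielding the $\max$ in the second branch of $f$. Your write-up is in fact more explicit than the paper's own terse argument about the reductions, the reproducibility of the adaptive transcripts, and the role of $CF()$'s unpredictability in preserving the $\frac{1}{\lfloor cs/m\rfloor+1}$ factor.
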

\vspace{-0.5cm}
The  theorem shows that 
$m$ can be chosen to make 
 the success probability of truncating the log  becomes negligible. This choice however will result in small value of $m$ (bigger than 1).
 %
 Achieving 
 $\epsilon\_$stability requires 
  large values of $m$,  while   crash integrity suggests 
  small $m$.
  By using 
   multiple evolving keys we can keep 
    $m$ small while achieving 
     $\epsilon\_$stability guarantee of the mechanism.
     This is because each key has a small evolving probability, so the probability that all keys are removed at the same time will be negligible. 
     If  attacker  truncates the file by more  than 
     $cs$ events,  there is at least one key that will be affected and this will 
     reveal the attack.
     The number of keys, $n_{sc}$,  will depend on the probabilities $\{\frac{1}{m_1}, \frac{1}{m_2},...,\frac{1}{m_{n_{sc}}}\}$.
     One can also choose 
     a different distribution for the choice function. By using uniform distribution for double evolving key mechanism, adversary can
     truncate the file by 
     at most $m$ log entries, 
     with success probability $\frac{m-\ell}{m+cs}$ for $\ell<m$. 
     Appendix \ref{Uniform} gives details of this analysis. 
     Finding the best probability distribution for $CF()$ to 
      minimizes the success probability of the attacker  is an interesting future research direction. 


\vspace{-0.3cm}
\subsection{Complexity analysis}\label{Complexity}
\vspace{-0.3cm}
According to $Log(\cdot,\cdot)$ defined in algorithm \ref{Log}, the complexity of adding one event is $O(1)$ since it needs  (i) evolving the keys and (ii) computing the HMAC, and hence there are constant number of disk operations. Although in SLiC \cite{CrashAttack} the computational complexity  of  logging is 
 $O(1)$, our proposed system is faster: the required computation in SLiC consists of (i) updating the keys,(ii) encrypting the log event, and (iii) performing a local permutation on the log file.  Additionally, each log operation in our scheme requires one write operation on disk whereas in SLiC each log operation requires two write operations. Moreover, in our system the order of events is preserved in the log file, so that searching a specific event is efficient. The complexity of $Recover(\cdot,\cdot)$ in our scheme (algorithm \ref{Recover}) for verifying the total number of $n'$ events is equal to $O(n')$; the first and the second loop in algorithm \ref{Recover} takes $O(n')$ computations, the third loop has complexity of $O(1)$ and the plausibility check has $O(n')$. In SLiC \cite{CrashAttack}, the complexity of recover algorithm is $O(n'log(n'))$ since it needs to run sort algorithm for verification. The complexity of our scheme is less than SLiC, but it is the same as  SLiC$^{Opt}$ \cite{CrashAttack} (please see Table \ref{Complexity}).
\vspace{-0.8cm}

\begin{table}[h!]
	\footnotesize
	\centering
	\caption{Comparison between computation complexity of our scheme and SLiC} \label{Complexity}
	\begin{tabular}{|c |c |c |c|}
		\hline
		Algorithm & Our scheme & SLiC \cite{CrashAttack}& SLiC$^{Opt}$ \cite{CrashAttack}\\
		\hline
		$Log(\cdot,\cdot)$ & $O(1)$ & $O(1)$ & $O(1)$ \\
		\hline
		$Recover(\cdot,\cdot)$ & $O(n')$ & $O(n'log(n'))$ & $O(n')$\\
		\hline
		\end{tabular}
\end{table}
\vspace{-0.9cm}
\section{Implementation and evaluation}\label{Implementation}
\vspace{-0.3cm}
We implement and evaluate the double evolving key mechanism in Python. The experiments are run on two hardware platforms:  a windows computer with 3.6 GHz Intel(R) Core(TM) i7-7700 CPU; a Raspberry Pi 3, Model B with 600 MHz ARM CPU running Raspbian. \\
\noindent
\textbf{Logging Performance:} We measure the logging performance on a prepared text file as the source of system events. The text file contains $2^{20}$ random strings, each with 160 characters. To implement our log scheme, we use ChaCha20\cite{Chacha20} for PRFs and  SHA-256 as hash function in HMACs. We find  the cache size of our machine using the same approach explained in \cite{CrashAttack}; the maximum UDP packet sending rate is 500 event/s (please see Appendix \ref{AppCache} for the result of our experiment).  Accordingly, the cache size, $cs$ is $15000\approx 2^{14}$ events considering the page eviction time of $30s$.  We set $m=cs$ and $T$ value in the $CF()$ is determined to be $2^{242}={2^{256}}/{2^{14}}$ which outputs 1 with probability ${1}/{2^{14}}$ \cite{DifMining} (please see Appendix \ref{AppCache} for the related experiments). The length of both keys is 256 bits. We implement two logging schemes to compare with our logging scheme: (i) \textit{Plain scheme:}  Each system events is stored in the log file as plaintext. (ii) \textit{SLiC:} The logging algorithm proposed in \cite{CrashAttack}.  We initialize with $\lambda=2^{15}$ randomly ordered dummy events as the same in \cite{CrashAttack}. We implement PRFs using ChaCha20 \cite{Chacha20},  HMACs using SHA-256 and encryption functions using AES-CTR-256. The size of the key in is 256 bits. 

By comparing our scheme with the plain scheme we can find the extra cost to provide crash integrity.  
We also compare our scheme with SLiC to find the extra cost of protecting against an adaptive adversary. Table \ref{Eval1} shows the total runtime to log $2^{20}$ system events using three aforementioned logging schemes on Windows PC and Raspberry Pi. We repeat the experiments for 5 times, each time with a new file containing $2^{20}$ events (the other settings remain same).
\vspace{-0.8cm}
\begin{table}[h!]
	\footnotesize
	\begin{center}
		\caption{\small The total time (in seconds) to log $2^{20}$ events}\label{Eval1}
		\begin{tabular}{ |c|c|p{1.3cm}|p{1.3cm}|p{1.3cm}|p{1.3cm}|p{1.3cm}| }
			\hline
			Hardware & Scheme & Exp1 & Exp2 & Exp3 & Exp4& Exp5 \\
			\hline
			\multirow{3}{6em}{Windows PC} & Our Scheme & 40.2 &40.2&40.4&40.7&40.5\\ 
			& SLiC & 95.2 &96.0&95.2&95.4&96.0\\ 
			& Plain & 2.0 &2.0&2.0&2.0&2.0\\ 
			\hline\hline
			\multirow{3}{6em}{Raspberry Pi} & Our Scheme & 330.5 &325.4&319.0&324.5&319.6\\ 
			& SLiC & 790.2&792.0&777.9&789.2&796.8\\ 
			& Plain & 18.8 &18.7&18.8&19.0&18.9\\ 
			\hline
		\end{tabular}
	\end{center}
	\vspace{-0.8cm}
\end{table}
For the same hardware and the same logging schemes, but with different files, the runtime remains same. This is aligned with our expectation: logging performance is independent of file content. On the windows PC, our scheme takes $\approx 40$ s ($\approx 26K$ events/s) on average. This represents a multiplicative overhead of 20, compared to the plain scheme, while SLiC takes $\approx 95$s, with an overhead of 47.  Compared to our log scheme, SLiC has a multiplicative overhead of 2, while our scheme provides extra security protection. However, in \cite{CrashAttack}, they observed a slowdown factor of 20 for logging rate on a laptop. The PRF they chose or the difference between their hardware and ours may cause the discrepancy of the result. Unfortunately, We could not find any detailed information regarding the implementation of PRF in \cite{CrashAttack}. 

The runtime on the Pi is roughly 8 times the runtime on the desktop, because of the computational limit of Raspberry Pi. The results shows that our logging scheme is still lightweight for the resource-constrained device.  It takes $\approx 324$ s ($\approx 3.2K$ events/s) on average to log $2^{20}$ events. The overhead of our scheme compared to plain scheme is 17.  SLiC has a multiplicative overhead of 2 compared to our scheme, and an overhead of 42 compared to plain scheme.\\
\noindent
\textbf{Recovery Performance:} Normally, we assume the logging results are written to a file in the OS. If crash happens, the verifier can always get the number of events ($n^{'}$) based on the size of the file. In our implementation, the value of $n^{'}$ is $2^{20}$. We run our $Recover(.,.)$ algorithm on the five log files generated earlier before. Table \ref{Eval} shows the total runtime to recover log files on two platforms. It takes $\approx 37.4$s on average to recover all the system events on the desktop and $\approx 308.4$s on the Pi. We observe that it takes slightly more time for logging than recovery, maybe because of the poor I/O handling of Python. In our implementation of log algorithm, the key is evolved per new line from the I/O. While in the implementation of recovery algorithm, all of the keys are reconstructed before any reading from the I/O.

\begin{table}[!t]
	\begin{center}
		\caption{\small  The total runtime (in seconds) to recover a log file of size $2^{20}$ events}\label{Eval}
		\begin{tabular}{|c| c |c| c |c |c|}
			\hline
			Hardware & Exp 1 & Exp 2 & Exp 3&Exp 4&Exp 5\\ [0.5ex] 
			\hline\hline
			Windows PC & 37.1 & 37.6 & 37.5 &37.3&37.3  \\ 
			\hline
			Raspberry Pi & 311.0 & 303.1& 302.0&303.4& 303.3  \\
			\hline
		\end{tabular}
	\end{center}
	\vspace{-0.8cm}
\end{table} 

\vspace{-0.4cm}
\section{Conclusion}\label{Conclusion}
\vspace{-0.3cm}
We proposed  adaptive crash attack where adversary can see intermediate states of the logging operation. By compromising the logging device, adversary can rewind the system back to one of the past states and then crash it to appear as a normal crash. We showed that this attack is strictly stronger than non-adaptive crash attack and all existing schemes are subjective to this attack. We also proposed double evolving key mechanism as a protection against rewinding which basically relies on two sequences of keys evolving with different rates. The security of scheme is proved and the performance of our approach is evaluated on both a desktop and Raspberry Pi.  Ensuring crash integrity against an adaptive attacker without considering a protected memory for keys is left as future work.

\noindent{\bf Acknowledgments.}
This work is in part supported by a research grant from Alberta Innovates in the Province of Alberta in Canada.
\vspace{-0.3cm}
	
	%
	%
\bibliographystyle{splncs04}
\bibliography{sample-bibliography}

\appendix

	\section{Detail of SLiC protocol \cite{CrashAttack}}
	\label{SLiCDiscription}
	\vspace{-0.3cm}
	There are three algorithms, $Ge(\cdot)$, $Log(\cdot,\cdot)$, and $Recover(\cdot,\cdot)$, as follows:
	
	\textbf{$Gen(1^\lambda)$:} An initial key $K_0$ and seed $seed_0$ are chosen uniformly from random. Log
	entries are stored in a dynamic array $S$. $S$ is initialized by storing $\lambda$ dummy events in the random order. To add these dummy elements
	$dummy_1;...;dummy_{\lambda}$ to $S$, $Log$ mechanism is used. The output
	of $Gen$ is the initial state $\Sigma_0$ comprising the key $K_0$, seed $seed_0$, and array $S$. State $\Sigma_0$ is shared between $L$ and $V$.

	\textbf{$Log(\Sigma_{i-1},m_i)$:} For log event $m_i$, the log entry $s_i$ is computed as $s_i = (c_i; h_i;\kappa_i)$ and appended to $S$; where $s_i$ is a dynamic array, $c_i$ is the encryption of event $m_i$ using key $K_i$ ($c_i = Enc_{K_i}(m_i)$) and $h_i$ is the HMAC of $m_i$ using the same key ($h_i = HMAC_{K_i}(c_i)$).
	The key $K_i$ is updated through a PRF $K_{i+1} = PRF_{K_i}(\chi)$
	for some constant $\chi$. The sorting key $\kappa_i$ is computed using a PRF ($\kappa_i=PRF_{K_{i-1}}(i)$) where $i$ is the index of the event.
	Then by using a PRG, $PRG(seed_i)$,  a position $pos$ is selected between $1$ and $\lambda+i$. $seed_i$ is updated to $seed_{i+1}$ through a PRF, $seed_{i+1} =PRF_{seed_i}(\chi')$.
	Position $pos$ is the
	position where the new log entry $s_i = (c_i; h_i; \kappa_i)$ is stored. If 	$pos \neq \lambda + i$, the log entries are swapped; $S[pos]$ is appended to $S$ and $s_i$ is written at
	$S[pos]$. Finally,  $K_{i-1}$ is evolved to $K_i$ and $seed_{i-1}$ to $seed_i$.
	
	\textbf{$Recover(\Sigma,\Sigma_0)$:}
	On receiving state $\Sigma$ as input, which comprises of log entries $\pi'=(s'_1,...,s'_{n'})$, the log file is parsed to $n'$
	log entries (which may be broken).
	$K_0$ and $seed_0$ are moved forward to the earliest possible time
	of a valid state. Expendable log
	events $\varepsilon$ are considered from indexes $n'-cs$ till $n'+cs$. To compute $\varepsilon$, the permutation $\pi$ is reconstructed to
	determine where log entries should be located in a valid
	$S$. Here, $\pi[i] = j$ denotes that log entry $s_i$ resides at $S[j]$.
	Then, $V$ replays
	$L$'s random coins and the swaps.
	The log entries $s'_i$ are sorted
	by their keys $\kappa_i$
	and stored in a
	binary search tree. $V$ then checks the HMAc of all $n'+cs$ log entries. The events that their HMAC is valid will be added to set $\cc{R}$. Finally, $V$
	perform a plausibility check to distinguish a regular crash from
	a crash attack. If the size of recovered set $\cc{R}$ is less than $\lambda-cs$, then there is a complete deletion. If log event $m_i$,  $1 \leq n'-cs$ is not recovered, it must be in the set of
	expendable events $\varepsilon$. If not, there is a deletion attack. Otherwise, $V$ outputs the recovered set $\cc{R}$.
	
\remove{
	\section{Timeline of crash attack} \label{AppT}

Figure \ref{timeline2} shows the timeline of adaptive crash attack versus non-adaptive crash attack.

\begin{figure}[h]
	\centering
	\setlength{\belowcaptionskip}{-2ex}
	\includegraphics[width=0.65\columnwidth]{Timeline2.jpg}
	\caption{Timeline of non-adaptive crash attack \cite{CrashAttack} versus our adaptive model.} \label{timeline2}
\end{figure}
}

\section{Impossibility result}
\label{Impossibility}
\textbf{Lemma A.}
	If there exists a non-adaptive adversary $\cc{A}_{na}$ who plays the \\ \textit{$Exp^{CrInt}_{\cc{A}_{na},\Pi,Crash}()$} game \cite{CrashAttack} and successfully deletes or modifies at least one event with probability $\epsilon$ from $\cc{L}$, then there exists an adaptive adversary $\cc{A}_{a}$ who interacts with $\cc{A}_{na}$ and wins the  \textit{$Exp^{AdapCr}_{\cc{A}_{a},\Pi}()$} game with probability \textit{at least} $\epsilon$. 

\begin{proof}
	To prove the lemma, we assume that $\cc{A}_{a}$ responds to Gen, Log, and Crash queries made by $\cc{A}_{na}$ with the help of $GEN_Q()$, $LOG_{\Sigma,Q}()$ and $CRASH_{\Sigma}()$ oracles in the \textit{$Exp^{AdapCr}_{\cc{A}_{a},\Pi}()$} game and the challenger $\cc{C}$. The goal of $\cc{A}_{a}$ is to win the \textit{$Exp^{AdapCr}_{\cc{A}_{a},\Pi}()$} game using the information outputted by $\cc{A}_{na}$.
	
	First, challenger $\cc{C}$ runs $Gen(1^{\lambda})$ and initializes LStore, KStore and cache (line 2 of \textit{$Exp^{CrInt}_{\cc{A}_{na},\Pi,Crash}()$} and \textit{$Exp^{AdapCr}_{\cc{A}_{a},\Pi}()$} ). Then,  $\cc{A}_{na}$ sends $n$ messages $m_i$, $1 \leq i \leq n$ to $\cc{A}_{a}$ to log (lines 3 to 5 of \textit{$Exp^{CrInt}_{\cc{A}_{na},\Pi,Crash}()$}). On receiving the message $m_i$, $\cc{A}_a$ adaptively calls the $LOG_{\Sigma,Q}()$ oracle, and receives the state of the LStore $\Sigma^{L}_i$ and its cache $cache^{L}_i$. Note that $\cc{A}_{na}$ can also call Gen and Crash oracles which are required to be handled by $\cc{A}_a$. In case of Gen queries, $\cc{A}_a$ calls $GEN_Q()$ oracle which initializes a new log on $\cc{L}$ and for Crash queries, $\cc{A}$ calls $CRASH_{\Sigma}()$. $\cc{A}_{na}$ removes or modifies some events, and returns the crashed state $\Sigma^{cr}$ and the positions that it has modified or deleted $\alpha_1, \cdots, \alpha_{\ell}$ (\textit{$Exp^{CrInt}_{\cc{A},\Pi,Crash}()$} line 6)). $\cc{A}_{a}$ outputs whatever $\cc{A}_{na}$ outputs and wins the game with the probability \textit{at least}\footnote{The adaptive adversary $\cc{A}_{a}$ has also seen intermediate states whereas $\cc{A}_{na}$ succeeds without observing those states. In the worst case, the extra knowledge that $\cc{A}_{a}$ has, does not help it and the success probability is $\epsilon$.} equal to $\epsilon$.
	
\end{proof}

\textbf{Lemma B.}
	if there is a scheme $\zeta$ which provides crash integrity (according to \\ \textit{$Exp^{CrInt}_{\cc{A}_{na},\Pi,Crash}()$} game),  it can be broken by an adaptive adversary $\cc{A}_a$ who plays \textit{$Exp^{AdapCr}_{\cc{A}_{a},\Pi}()$} game.

\begin{proof}
	We prove the lemma using SLIC \cite{CrashAttack} as $\zeta$. \remove{SLIC \cite{CrashAttack} uses a permutation on the log file; whenever a new log event is received, it  is swapped with an older event chosen randomly from the past log entries.} We follow the \textit{$Exp^{AdapCr}_{\cc{A}_{a},\Pi}()$} game on SLIC \cite{CrashAttack}. 
	First, challenger $\cc{C}$ calls $Gen(1^\lambda)$ algorithm in \cite{CrashAttack} which initializes, (i) the LStore using $Log(\cdot,\cdot)$, (ii) the KStore to hold the required keys, and (iii) the cache to be empty. It returns the LStore $\Sigma^{L}_0$ and its associated cache $cache^{L}_0=\emptyset$ to $\cc{A}_a$. $\cc{A}_a$ calls $LOG_{\Sigma,Q}()$ to log $n$ messages $m_i$, where $1 \leq i \leq n$. Each message is logged using $Log(\cdot,\cdot)$ algorithm in \cite{CrashAttack} and the LStore $\Sigma^{L}_i$ and $cache^{L}_i$ are returned to $\cc{A}_a$.  Finally, $\cc{A}_a$ compromises $\cc{L}$, rewinds its state to be as $\Sigma'_{n'}=[\Sigma^{L}_{n'},\emptyset,(cache^{L}_{n'},\emptyset)]$, where $1 \leq n' < n-cs$ and calls $CRASH_{\Sigma}()$. Then $\cc{A}$ returns the crashed state $\Sigma^{'cr}_{n'}$ and all the positions $\alpha_i$ it has removed where $\alpha_i \in \{n'+1,\cdots,n\}$ (note that $n'+cs+1,..,n$ are not in the expendable set of $\Sigma^{'cr}_{n'}$). Challenger $\cc{C}$ runs $Recover(\cdot,\cdot)$ of \cite{CrashAttack} to get $\c{R}$. Since $\Sigma'_{n'}$ is a valid state $\cc{R}$ is not $\perp$, but the $\alpha_i$s are missing from it. Therefore, adversary wins the game and successfully deletes the events from $\cc{L}$ without being detected.
	
\end{proof}	
\vspace{-0.3cm}
\section{Proof of Lemma \ref{Key} and Lemma \ref{lemset}} \label{Proofs}
\textbf{Proof of Lemma \ref{Key}.}
	To prove the lemma, we should compute $P(\mbox{deletion} \wedge \mbox{key updates})$. Note that key can be removed from KStore if key is being updated at the moment that crash happens. (Recall that in the update procedure of key, (i) the current key $k_{i-1}$ goes to cache and updated to a new key $k_i$, (ii) $k_i$ is written to KStore and (iii) $k_{i-1}$ is deleted. Assuming that the logging system allows reordering of this process, deletion of $k_{i-1}$ can happen before $k_i$ is written on KStore. A crash at this moment leads to a state where the keys are missing from both KStore and cache).  Let $\alpha$ denotes the probability that a key is deleted from KStore (because of re-ordering procedure in the system). If state-controlled key evolves with probability $\frac{1}{m}$ at each event, the probability that key is removed during a normal crash will be  $\alpha \times \frac{1}{m}$.

\textbf{Proof of Lemma \ref{lemset}.}
	Assume that the set $K'_n$ represents the possible state-controlled keys after crash in state $\Sigma_n$, which consists of the last generated key before index $n$ and all keys from $n$ to $n+cs$. The size of $K'_n$ is on average equal to $\lfloor\frac{cs}{m}\rfloor+1$ based on the frequency of key evolving.
	When adversary compromises the device, one of the keys in $K'_n$ resides in KStore; each key has probability $\frac{1}{|K'_n|}$ to be in KStore. Assume that adversary wants to keep that key, and truncate $cs+\ell$ entries from the log. 
	 If the key which is in the KStore is also in the possible key set at state $n'$, $K'_{n'}$, attacker succeeds. To find the success probability of attacker we need to first consider all the cases that $K'_n$ intersects with $K'_{n'}$, and calculate their probability. (Remember that the new log file has length of $n'=n-cs-\ell$, and the ExpSet for the malicious state $\Sigma_{n'}$ is between $n-2cs-\ell+1$ to $n-\ell$.)
	
	\noindent There are two possible cases, in each of them there is a possibility that at least one state-controlled key is in both possible key sets of $K'_n$ and $K'_{n'}$.
	\begin{itemize}
		\item Case 1: The last key evolved before $n$ is also the last key evolved before $n'$, so both key sets have the same last key. 
		\item Case 2: The last key evolved before $n$ is not the same as the last key evolved before $n'$, but it evolves between $n'+1$ to $n'+cs$, so it is also in $K'_{n'}$ .
	\end{itemize}  
	
	Case 1 implies that there is no evolving between $n'$ to $n$. The length of this interval is $cs+\ell$. This probability equals to (binomial distribution) $P_1=(1-\frac{1}{m})^{cs+\ell}$.
	
	Case 2 implies that we have at least 1 key evolving between $n'+1$ and $n'+cs$ (the length of interval is $cs$) and there is no key evolving between $n'+cs+1$ till $n$ (the length of interval is $\ell$); This leads to the following probability:
	
	\noindent
	$P_2=[1-(1-\frac{1}{m})^{cs}](1-\frac{1}{m})^{\ell}=(1-\frac{1}{m})^{\ell}-(1-\frac{1}{m})^{cs+\ell}$
	
	Attacker succeeds if either case 1 or case 2 happens and the key in the KStore is the one which is in the intersection of the two key sets. So, the success probability of attacker is bounded to: $P_s=(P_1+P_2)\times \frac{1}{|K'_n|}$
	
	\noindent	
	By substituting $P_1$ and $P_2$ and by considering $|K'_n|= \frac{cs}{m}+1$, we get:
	
	\noindent	
	$P_s=(1-\frac{1}{m})^\ell\times \frac{1}{\lfloor\frac{cs}{m}\rfloor+1}$

\textbf{Proof of Theorem \ref{Theo}.}
	Let assume that $\epsilon_{PRF}(\lambda)=max\{\epsilon_{PRF_1}(\lambda), \epsilon_{PRF_2}(\lambda)\}$, where $\epsilon_{PRF_1}(\lambda)$ is the success probability of attacker in breaking PRF-HMAC \cite{BY} for the sequential key and $\epsilon_{PRF_2}(\lambda)$ is the success probability of attacker in breaking PRF-HMAC for the state-controlled key.\\
	The success probability of attacker in modification is equal to success probability of breaking PRF-HMAC (either the sequential key or the state-controlled key), so it is $\epsilon_{PRF}(\lambda)$.\\
	For deletion and rewinding attacks, if adversary deletes 1 log event from the log file at position $pos<n-cs$ it should forge the PRF-HMAC for all log entries greater equal to $pos$.
	so, the integrity of log file is reduced to the security of PRF-HMAC (either the sequential key or the state-controlled key).\\
	If adversary does a total deletion attack, $n'<1$, the attack is detected and hence the success probability is $0$. If attacker rewinds (or truncates) the system back to $n' = n-(cs+\ell)$, in which $cs+\ell$ events has been truncated and keeps the key in the KStore without change, the success probability of attacker will be $max\{\epsilon_{PRF_1}(\lambda),P_s\}$, based on the Lemma \ref{lemset}. If $m$ is chosen such that for $\ell>\ell_0$ the attack mentioned in Lemma \ref{lemset} becomes negligible, then attacker should modify the key in the KStore for $\ell>\ell_0$; success probability of such attacker is bounded by the security of HMAC-PRF, which is $\epsilon_{PRF_2}(\lambda)$.

\vspace{-0.4cm}
\section{Uniform choice function}\label{Uniform}
\begin{lemma}\label{lemset2}
	Assuming that the evolving rate of state-controlled key is  $\frac{1}{m}$ over each interval size of $m$ where $m <cs$, and attacker compromises the device at sate $\Sigma_n$, truncates $cs+\ell$ events from the log file, results in malicious state $\Sigma_{n'}$, and keeps the key in KStore untouched. The success probability of such attacker is bounded to $P_s=\frac{m-\ell}{m+cs}$ for $\ell<m$. 
\end{lemma}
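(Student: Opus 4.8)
The plan is to follow the same skeleton as the proof of Lemma~\ref{lemset}, only replacing the per-event geometric evolution (probability $1-\tfrac1m$ per slot) by the uniform-per-block evolution that the uniform choice function induces. First I would make the distribution precise: under the uniform choice function the state-controlled key evolves exactly once inside every window of $m$ consecutive events, its position inside that window is uniform over the $m$ slots, and distinct windows evolve independently. This single structural fact is what replaces the independent Bernoulli trials of Lemma~\ref{lemset}.

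Next I would recompute the plausible-key set $K'_n$ exactly as before: it consists of the last key evolved before index $n$ together with all keys evolved in the window $(n,n+cs]$. Since that window has length $cs$ and now contains exactly one evolution per block of size $m$, it contributes $cs/m$ keys, so $|K'_n| = cs/m + 1 = (m+cs)/m$, and (as in the model of Lemma~\ref{lemset}) the key sitting in KStore at compromise is uniform over these $|K'_n|$ candidates. Because $n' = n-cs-\ell < n$, the only key that $K'_n$ and $K'_{n'}$ can share is the last key evolved before $n$, and that key belongs to $K'_{n'}$ precisely when no evolution falls in the final $\ell$ positions $(n-\ell,n]$, i.e.\ the interval $(n'+cs,n]$ of length $\ell$.

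I would then evaluate the two factors separately. Because $\ell<m$, the $\ell$ trailing positions sit inside a single block, so the unique evolution of that block misses all of them with probability $(m-\ell)/m$; and conditioned on the shared key existing, the KStore key coincides with it with probability $1/|K'_n| = m/(m+cs)$. Multiplying gives $P_s = \tfrac{m-\ell}{m}\cdot\tfrac{m}{m+cs} = \tfrac{m-\ell}{m+cs}$, as claimed.

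The main obstacle is the block-alignment subtlety hidden in the factor $(m-\ell)/m$: if the $\ell$ trailing positions happen to straddle a block boundary, the avoidance probability splits into a product over two independent blocks and depends on the offset of $n$ within its block, so the clean value $(m-\ell)/m$ is really a representative/bounding figure rather than an exact average. I would therefore state the result as a bound (matching the lemma's ``bounded to'' phrasing) and remark that averaging over the offset changes it only by lower-order terms; the hypothesis $\ell<m$ is exactly what keeps the trailing window inside one block and keeps $m-\ell>0$.
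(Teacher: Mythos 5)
Your proposal follows essentially the same route as the paper's proof: the same key-set size $|K'_n|=\frac{cs}{m}+1$, the same reduction of success to the event that no key evolution falls in the trailing $\ell$ positions (your single merged condition is exactly the union of the paper's Case~1, shown there to have probability $0$ since $cs+\ell>m$, and Case~2, with probability $1-\frac{\ell}{m}$), and the same final product $\left(1-\frac{\ell}{m}\right)\cdot\frac{m}{m+cs}=\frac{m-\ell}{m+cs}$. The block-boundary alignment issue you flag is a real subtlety that the paper's (informal) proof silently ignores, so noting it only strengthens the argument.
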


\begin{proof}
The proof is similar to Lemma \ref{lemset} and we do not repeat the common parts for brevity. We need to only re-compute the probability of two possible cases stated in Lemma \ref{lemset} that lead to an intersection between key sets:
\remove{The set $K'_n=\{k'_{\gamma}, k'_{\gamma+1}, ...\}$  represents the possible state-controlled keys after crash in state $\Sigma_n$, which consists of the last generated key before index $n$, $k'_{\gamma}$ and all keys from $n$ to $n+cs$, $\{k'_{\gamma+1}, ...\}$. The size of $K'_n$ is on average equal to $\frac{cs}{m}+1$ based on the frequency of key evolving.
When adversary compromises the device, one of the keys in $K'_n$ will reside in KStore; each key has probability $\frac{1}{|K'_n|}$ to be in KStore since one possible case is that state-controlled key has been evolved but the corresponding event is in the cache and does not get the chance to be written in LStore. Assume that adversary wants to keep that key, and truncates $cs+\ell$ entries from the log. Note that adversary does not know the events that corresponds to the key evolving (because we used $k'_{j-1}$ to randomize the HMAC). If the key which is in the KStore is also in the possible key set at state $n'$, $K'_{n'}$, attacker succeeds. To find the success probability of attacker we need to first find the probability that $K'_n$ intersects with $K'_{n'}$ at the truncated state. We consider all the cases that results in such intersection. (Note that the new log file has length of $n'=n-cs-\ell$, and the ExpSet for the malicious state $\Sigma_{n'}$ is between $n-2cs-\ell+1$ to $n-\ell$.)
\begin{itemize}
	\item Case 1: The last key evolved before $n$ is also the last key evolved before $n'$, so both key sets have the same last key. 
	\item Case 2: The last key evolved before $n$ is not the same as the last key evolved before $n'$, but it evolves between $n'+1$ to $n'+cs$, so it is also in $K'_{n'}$ .
	
\end{itemize}  

Now, we find the probability of each case:
}

Case 1 implies that there is no evolving between $n'$ to $n$. The length of this interval is $cs+\ell$ and $cs+\ell > cs> m $; based on the evolving rate we know that there is one key evolving in this interval, so $p_1=0$ and case 1 is not probable. 

Case 2 implies that we have 1 key evolving between $n'+1$ and $n'+cs$ (the length of interval is $cs$) and there is no key evolving between $n'+cs+1$ till $n$ (the length of interval is $\ell$); as $cs>m$ the first statement is always true and there is 1 key evolving between $n'+1$ and $n'+cs$. The second statement, is not probable if $\ell > m$, since we have 1 key evolving, but if $\ell <m$ then with probability $P_2=1-\frac{\ell}{m}$ we have no evolving. Therefore, attacker succeeds if case 2 happens and the key in the KStore is the one which is in the intersection of the two key sets. So, the success probability of attacker is bounded to: $P_s=P_2\times \frac{1}{|K'_n|}, \ell<m$.

\noindent	
By considering $|K'_n|= \frac{cs}{m}+1$, we get:	
$P=(1-\frac{\ell}{m})\times \frac{1}{\frac{cs}{m}+1}=\frac{m-\ell}{m+cs},\ell<m$.
\end{proof}

\vspace{-0.5cm}
\remove{Note that adversary can truncate at most $cs+m$ log entries. If we choose $m$ to be equal to $\frac{cs}{3}$, then adversary can truncate maximum of $cs+\frac{cs}{3}$, and the probability for $\ell<\frac{cs}{3}$ is $\frac{3cs-9\ell}{4cs}$. }
\begin{theorem}\label{Theo2}
Our construction, using a uniformly distributed choice function over $m$, provides $[\epsilon_{PRF}(\lambda),\epsilon_{PRF}(\lambda),f(n,n',\ell,cs,\lambda)]$-Crash Integrity against an adaptive attacker $\cc{A}$, where PRF-HMAC is $\epsilon_{PRF}(\lambda)\_$secure, $\ell$ is the number of events adversary wants to delete, $cs$ is the cache size, $n$ is the size of log file at state $\Sigma_n$, $n'$ is the number of log entries output by adversary in the malicious state $\Sigma_{n'}$, $\lambda$ is the security parameter, and $f()$ is as follows:

\begin{minipage}{\textwidth}
\mathleft
\begin{equation}
\footnotesize
f=\begin{cases}
0, & \text{if $n'<1$}\\ \frac{m-\ell}{m+cs}, & \text{else if $n'\geq n-(cs+\ell)$}\\ \epsilon_{PRF}(\lambda), & \text{else if $n'<n-(cs+\ell)$}
\end{cases}
\end{equation}
\end{minipage}
\end{theorem}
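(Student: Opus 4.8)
The plan is to bound, separately, the probability that $\cc{A}$ wins $Exp^{AdapCr}_{\cc{A},\Pi}$ through each of its three disjoint winning conditions --- modification, deletion, and rewind --- and then read off the three components of the claim. Throughout I would lean on the two structural facts that drive the whole argument: (i) the sequential key $k_i$ and the state-controlled key $k'_j$ governing any entry below the compromise point have already been evolved forward through the PRF and erased, so by forward security they remain unavailable to $\cc{A}$ even after it reads the KStore; and (ii) because every choice-function evaluation $CF(k'_{j-1},i)$ and every update-point tag $HMAC_{k'_j}(m_i,k'_{j-1})$ is keyed by a \emph{deleted} state-controlled key, $\cc{A}$ cannot locate the positions at which $k'$ was last updated, despite having observed every intermediate LStore.

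For the modification condition I would give a reduction to PRF-HMAC unforgeability. If $\cc{A}$ outputs an index $\alpha_i=\rho_j\notin ExpSet$ together with a value $m'_j\neq m_{\rho_j}$ that survives the HMAC check inside $Recover$, then $\cc{A}$ has produced a valid tag under either $k_{\rho_j}$ or the state-controlled key $k'_{j}$ for that entry; since both keys predate the compromise and were erased, such a forgery contradicts the $\epsilon_{PRF}(\lambda)$-security of PRF-HMAC, yielding the first component $\epsilon_{PRF}(\lambda)$. For deletion I would split on $n'$: if $n'<1$ the $init\_message$ is gone and the plausibility check returns $\perp$, contributing $0$; otherwise deleting any interior entry at a position $pos<n-cs$ shifts the index of every later entry, and since the verification key for entry $i$ is index-bound through the key evolution, $\cc{A}$ must re-forge the tags of all entries $\geq pos$, again reducing to PRF-HMAC and giving $\epsilon_{PRF}(\lambda)$.

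The rewind condition is the heart of the theorem and the step I expect to be hardest. Here $\cc{A}$ truncates $cs+\ell$ tail entries to reach a valid-looking shorter state $\Sigma_{n'}$ with $\cc{R}=\Sigma'^{L}$ for some queried $\Sigma'\in Q$; to recover cleanly, the state-controlled key left in the KStore must be consistent with $\Sigma_{n'}$, and $\cc{A}$ has exactly two strategies. If it \emph{keeps} the KStore key untouched, success requires that this key lie in the intersection $K'_n\cap K'_{n'}$ of the two plausible-key sets; invoking Lemma \ref{lemset} (together with fact (ii), which prevents $\cc{A}$ from steering the truncation onto a known update point) bounds this by $P_s=(1-\tfrac{1}{m})^{\ell}\times\tfrac{1}{\lfloor cs/m\rfloor+1}$. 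If instead it \emph{overwrites} the KStore key to match $\Sigma_{n'}$, it must produce the state-controlled key of that state, a fresh PRF output costing $\epsilon_{PRF}(\lambda)$. Taking the better of the two strategies gives $\max\{\epsilon_{PRF}(\lambda),P_s\}$, exactly the ``otherwise'' branch of $f$. The delicate point will be arguing rigorously that no adaptive observation of past LStores collapses the intersection uncertainty of Lemma \ref{lemset} --- i.e. that the randomized tag $HMAC_{k'_j}(m_i,k'_{j-1})$ genuinely hides the update positions --- since this is precisely what stops $\cc{A}$ from choosing $\ell$ so as to force $K'_n\cap K'_{n'}$ deterministically. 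Assembling the three bounds then gives the claimed $[\epsilon_{PRF}(\lambda),\epsilon_{PRF}(\lambda),f]$-crash integrity.
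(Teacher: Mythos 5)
There is a genuine gap: you have proved the wrong theorem. Theorem \ref{Theo2} concerns the \emph{uniform} choice function (exactly one state-controlled key update, placed uniformly at random, in each window of $m$ consecutive entries), and its middle branch $\frac{m-\ell}{m+cs}$ comes from Lemma \ref{lemset2}, not Lemma \ref{lemset}. Your rewind analysis invokes Lemma \ref{lemset} and arrives at $P_s=(1-\tfrac{1}{m})^{\ell}\cdot\tfrac{1}{\lfloor cs/m\rfloor+1}$, then asserts this is ``exactly the otherwise branch of $f$'' --- but that max-based formula is the $f$ of Theorem \ref{Theo} (Bernoulli-rate choice function), and it does not equal the claimed bound here. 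The distinction is not cosmetic: under the uniform schedule, Case 1 of the intersection analysis (no key evolution across the truncated window of length $cs+\ell>m$) is \emph{impossible}, so $P_1=0$; only Case 2 survives, with probability $1-\frac{\ell}{m}$ when $\ell<m$ and $0$ when $\ell\geq m$, and dividing by $|K'_n|=\frac{cs}{m}+1$ gives $\frac{m-\ell}{m+cs}$. This also changes the shape of $f$: Theorem \ref{Theo2} is a three-way case split on $n'$ --- shallow truncation ($n'\geq n-(cs+\ell)$, i.e.\ $\ell<m$ feasible for the keep-key strategy) is bounded by $\frac{m-\ell}{m+cs}$, while deep truncation ($n'<n-(cs+\ell)$) forces the adversary to forge the state-controlled key of the rewound state, giving $\epsilon_{PRF}(\lambda)$ --- rather than a single $\max\{\cdot,\cdot\}$ branch.

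The rest of your skeleton does match the paper's argument for the theorem family: the modification bound by reduction to PRF-HMAC unforgeability, the $n'<1$ total-deletion case caught by the plausibility check (contributing $0$), interior deletion forcing re-forgery of all subsequent tags, and the two-strategy analysis of rewind (keep the KStore key versus overwrite it). Your ``fact (ii)'' about the randomized tag $HMAC_{k'_j}(m_i,k'_{j-1})$ hiding update positions is also used by the paper and remains necessary in the uniform setting (the adversary knows there is one update per window but not where). So the fix is localized: replace the appeal to Lemma \ref{lemset} with Lemma \ref{lemset2}, note that $\ell\geq m$ kills the keep-key strategy outright, and restate the conclusion in the three-case form of $f$.
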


 \section{Extra experiments}\label{AppCache}
 \noindent
 \textbf{Estimation of Cache size} We used the same method described in \cite{CrashAttack} to estimate the Cache size of our Windows computer. We used a remote computer as the syslog server to send the system events to our windows computer with different rates. Both computers are in the same LAN and the communication protocol is UDP. To calculate the packet drop rate, we divided the length of  received data on our computer by the length of the sent data from the syslog server. The results are shown in Figure \ref{drop}. The drop rate starts to grow when the sending rate increases to 500 events/s.
 
 \noindent
 \textbf{State-controlled key update} To validate whether the update frequency of the state-controlled key in practical is close to the theoretical value of ${1}/{2^{14}}$, we record the distance (number of system events) between every two state-controlled keys and collected the data out of  five experiments on Windows PC. The data is plotted to a histogram in Figure \ref{distribution}. The mean value of the distance is close to the desired value, which is $2^{14}$. However, $56\%$ of the distances is less than 15000 events. This means $56\%$ of the state control key is updated within 15000 events. The longest distance is 80000 events. All keys are updated below this maximum value.
 
  \begin{minipage}{0.5\textwidth}
 	\begin{figure}[H]
 		\centering
 		\includegraphics[width=0.8\columnwidth]{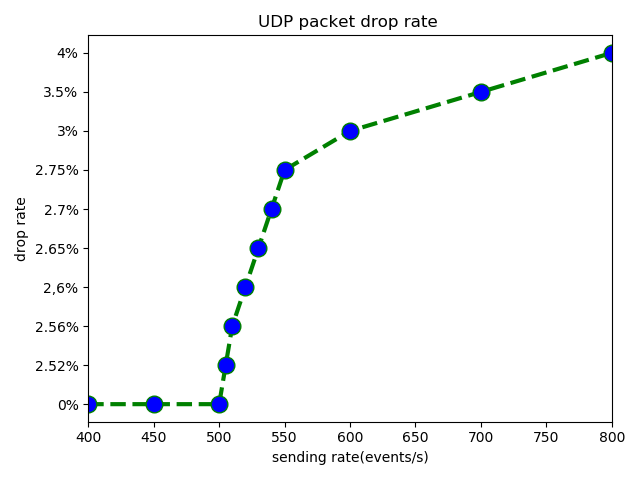}
 		\caption{\small Packet drop rate} \label{drop}
 	\end{figure}
 \end{minipage}
 \hfill
 \begin{minipage}{0.5\textwidth}
 	\begin{figure}[H]
 		\setlength{\belowcaptionskip}{-5mm}
 		\small
 		\centering
 		\includegraphics[width=0.8\textwidth]{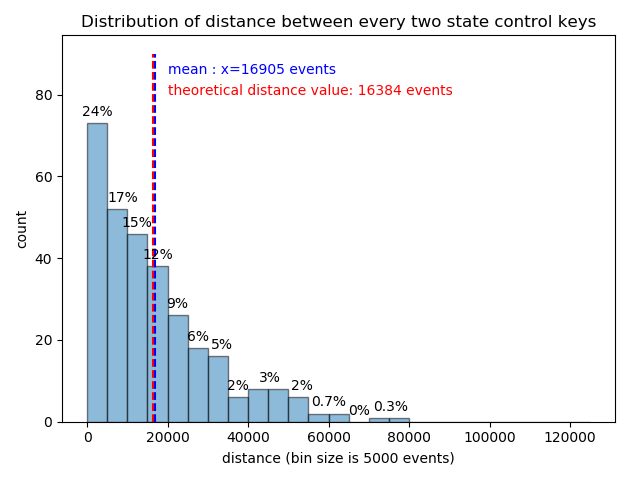}
 		\caption{Distribution of distance between every two state-controlled keys}\label{distribution}
 	\end{figure}
 \end{minipage}

\remove{
\section{Crash attack in distributed setting}\label{Distributed}
Considering crash attack in the distributed setting is one of the possible future directions. One example of distributed logging is state-machine replication systems in which the copy of log is stored on multiple systems, e.g. Raft \cite{Raft}, Paxos \cite{Paxos}, and Apache Kafka \cite{Kafka}. As we know in replicated systems we have copy of data in other systems, so even if cache is being removed data will not be lost. However, the state of the system can be lost, if attacker removes the index of last committed log and then truncates the log file. After recovery, it is not possible to detect the attack if the logs are stored plainly. To bootstrap this system, all removed logs should be sent from the leader which causes delay and unavailability on the system. So, the crash attack in distributed setting leads to delay attack in replicated systems. Finding a prevention mechanism which is aligned to requirements of replicated systems, like the system model (append-only log file), efficiency and fine-grained log access/verification can be a future research topic.
}

\end{document}